\documentclass[11pt]{article}


\newif\iffullver
\fullverfalse
\fullvertrue 

\newcommand{\short}[1]{\iffullver\else#1\fi}


\usepackage{amsmath, amssymb, amsthm}
\usepackage{algorithm, algorithmic}
\usepackage{xspace}

\usepackage{times}
\short{\usepackage[small,compact]{titlesec}}


\topmargin 0pt
\advance \topmargin by -\headheight
\advance \topmargin by -\headsep

\textheight 8.9in

\oddsidemargin 0pt
\evensidemargin \oddsidemargin
\marginparwidth 0.5in

\textwidth 6.5in

 \addtolength{\oddsidemargin}{-0.25in}
    \addtolength{\evensidemargin}{-0.25in}
    \addtolength{\textwidth}{0.5in}

  \addtolength{\topmargin}{-.25in}


\newcommand{\idc}{{iterative database construction}\xspace}

\newcommand{\IDC}{{Iterative Database Construction}\xspace}

\newcommand{\dus}{database update sequence\xspace}
\newcommand{\DUS}{Database Update Sequence\xspace}
\newcommand{\update}{\mathbf{U}}

\newcommand{\univ}{\mathcal{X}}
\newcommand{\cx}{\mathcal{X}}
\newcommand{\db}{\mathcal{D}}
\newcommand{\dba}{\mathcal{D}}
\newcommand{\dbb}{\mathcal{D}'}
\newcommand{\dbas}{\mathcal{D}^{\textrm{SMW}}}
\newcommand{\dbs}{\mathbb{N}^{|\univ|}}
\newcommand{\dbstruct}{\mathcal{R}_{\update}}

\newcommand{\query}{Q}
\newcommand{\queryset}{\mathcal{Q}}

\newcommand{\querystep}[1]{\query_{#1}}
\newcommand{\dbstep}[1]{\db_{#1}}

\newcommand{\maxupdates}{B}
\newcommand{\updates}{T}
\newcommand{\Lap}{\mathrm{Lap}}
\newcommand{\acc}{\alpha}

\newcommand{\real}{A}
\newcommand{\noisyreal}{\widehat{\real}}

\newcommand{\noisyrealstep}[1]{\noisyreal_{#1}}


\newcommand{\set}[1]{\left\{#1\right\}}

\newcommand{\eqdef}{\mathbin{\stackrel{\rm def}{=}}}

\newcommand{\Ex}{\mathbb{E}}

\newcommand{\R}{\mathbb{R}}




\newtheorem{theorem}{Theorem}[section]
\newtheorem{lemma}[theorem]{Lemma}

\newtheorem{claim}[theorem]{Claim}
\newtheorem{remark}[theorem]{Remark}
\newtheorem{corollary}[theorem]{Corollary}

\theoremstyle{definition}
\newtheorem{definition}[theorem]{Definition}

\newcommand{\initOneLiners}{%
    \setlength{\itemsep}{0pt}
    \setlength{\parsep }{0pt}
    \setlength{\topsep }{0pt}
}
\newenvironment{OneLiners}[1][\ensuremath{\bullet}]
    {\begin{list}
        {#1}
        {\initOneLiners}}
    {\end{list}}



\topmargin 0pt \advance \topmargin by -\headheight \advance \topmargin by -\headsep

\textheight 8.9in

\oddsidemargin 0pt \evensidemargin \oddsidemargin \marginparwidth0.5in

\textwidth 6.5in



\author{Avrim Blum\thanks{Department of Computer Science, Carnegie Mellon University, Pittsburgh
    PA 15213. Email: {\tt avrim@cs.cmu.edu}} \and Aaron Roth\thanks{Department of Computer and Information Science, University of Pennsylvania, Philadelphia PA 19104.  Email: {\tt aaroth@cis.upenn.edu}}}
\title{\bf Fast Private Data Release Algorithms for Sparse Queries}

\begin{document}
\maketitle
\sloppy

\begin{abstract}
We revisit the problem of accurately answering large classes of statistical queries while preserving differential privacy. Previous approaches to this problem have either been very general but have not had run-time polynomial in the size of the database, have applied only to very limited classes of queries, or have relaxed the notion of worst-case error guarantees. In this paper we consider the large class of \emph{sparse} queries, which take non-zero values on only polynomially many universe elements. We give efficient query release algorithms for this class, in both the interactive and the non-interactive setting. Our algorithms also achieve better accuracy bounds than previous general techniques do when applied to sparse queries: our bounds are independent of the universe size. In fact, even the runtime of our interactive mechanism is independent of the universe size, and so can be implemented in the ``infinite universe'' model in which no finite universe need be specified by the data curator.
\end{abstract}

\section{Introduction}
A database $\dba$ represents a finite collection of individual records from some \emph{data universe} $\univ$, which represents the set of all \emph{possible} records. We typically think of $\univ$ as being extremely large: exponentially large in the size of the database, or in some cases, possibly even infinite. A fundamental task in private data analysis is to accurately answer statistical queries about a database $\dba$, while provably preserving the privacy of the individuals whose records are contained in $\dba$. The privacy solution concept we use in this paper is \emph{differential privacy}, which has become standard, and which we define in section \ref{sec:prelims}.

Accurately answering statistical queries is the most well studied problem in differential privacy, and the results to date come in two types. There are a large number of extremely general and powerful techniques (see for example \cite{BLR08,DNRRV09,DRV10,RR10, HT10, HR10}) that can accurately answer arbitrary families of statistical queries which can be exponentially large in the size of the database. Unfortunately, these techniques all have running time that is at least linear in the size of the data universe $|\cx|$ (i.e. possibly \emph{exponential} in the size of the database), and so are in many cases impractical. There are also several techniques that do run in polynomial time, but that are limited: either they can answer queries from a very general and structurally rich class (i.e. all low-sensitivity queries), but can only answer a linear number of such queries (i.e. \cite{DMNS06}), or they can answer a very large number of queries, but only from a structurally very simple class (i.e. intervals on the unit line\footnote{The algorithm of \cite{BLR08} can be generalized to answer axis-aligned rectangle queries in constant dimension, but this is still a class that has only constant VC-dimension.} \cite{BLR08}), or as in several recent results (for conjunction and parity queries respectively) \cite{GHRU11,HRS11} they run in polynomial time, but offer only average case guarantees for randomly chosen queries. One of the main open questions in data privacy is to develop general data release techniques comparable in power to the known exponential time techniques that run in polynomial time. There is evidence, however, that this is not possible for arbitrary linear queries \cite{DNRRV09, UV11, GHRU11}.

In this paper, we consider a restricted but structurally rich class of linear queries which we call $\emph{sparse}$ queries. We say that a query is $m$-sparse if it takes non-zero values on only $m$ universe elements, and that a class of queries is $m$-sparse if each query it contains is $m'$ sparse for some $m' \leq m$. We will typically think of $m$ as being some polynomial in the database size $n$. Note that although each individual query is restricted to have support on only a polynomially sized subset of the data universe, different queries in the same class can have different supports, and so a class of sparse queries can still have support over the entire data universe. Note that the class of $m$-sparse queries is both very large (of size roughly $|\cx|^m$), and very structurally complex (the class of $m$-sparse queries have VC-dimension $m$). Sparse queries represent questions about individuals whose answer is rarely ``yes'' when asked about an individual who is drawn uniformly at random from the data population. Nevertheless, such questions can be useful to a data analyst who has some knowledge about which segment of the population a database might be drawn from. For example, a database resulting from a medical study might contain individuals who have some rare disease, but the data analyst does not know \emph{which} disease -- although there may be many such queries, each one is sparse. Alternately, a data analyst might have knowledge about the participants of several previous studies, and might want to know how much overlap there is between the participants of each previous study and of the current study. In general, sparse queries will only be useful to a data analyst who has some knowledge about the database, beyond that it is merely a subset of an exponentially sized data universe. Our results can therefore be viewed as a way of privately releasing information about a database that is useful to specialists -- but is privacy preserving no matter who makes use of it. In general, this work can be thought of as part of an agenda to find ways to make use of the \emph{domain knowledge} of the data analyst, to make private analysis of large-scale data-sets feasible.

\subsection{Results}
We give two algorithms for releasing accurate answers to $m$-sparse queries while preserving differential privacy: one in the interactive setting, in which the data curator acts as an intermediary and must answer an adaptively chosen stream of queries as they arrive, and one in the non-interactive setting, in which the data curator must in one shot output a data-structure which encodes the answers to every query of interest. In the interactive setting, we require that the running time needed to answer each query is bounded by a polynomial in $n$, the database size (so to answer any sequence of $k$ queries takes time $k\cdot \textrm{poly}(n)$). In the non-interactive setting, the entire computation must be performed in time polynomial in $n$, and the time required to evaluate any query on the output data structure must also be polynomial. Therefore, from the point of view of running time, the non-interactive setting is strictly more difficult than the interactive setting.

In the interactive setting, we give the following utility bound:
\begin{theorem}[Informal, some parameters hidden]
There exists an $(\epsilon,\delta)$-differentially private query release mechanism in the interactive setting, with running time per query $\tilde{O}(m/\alpha^2)$ that is $\alpha$-accurate with respect to any set of $k$ adaptively chosen $m$-sparse queries with:
$$\alpha = O\left(\frac{\left(\log m\right)^{1/4}\left(\log\frac{1}{\delta}\log k\right)^{1/2}}{(\epsilon n)^{1/2}}\right)$$
\end{theorem}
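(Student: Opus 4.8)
The plan is to instantiate the \idc{} / private multiplicative weights template (cf.\ \cite{HR10}) in a way that never materializes the ambient universe $\univ$, so that both the error and the per-query running time depend only on the sparsity $m$ and the database size $n$. I maintain a ``fake'' fractional histogram $\dbas_t$ stored implicitly as a single uniform background value together with an explicit list of corrections for those universe elements that have appeared in the support of some query so far; because an $m$-sparse query reads only $m$ coordinates, a query can be answered without ever enumerating $\univ$ (hence the ``infinite universe'' feature). Queries arrive one at a time. For query $\querystep{t}$ I apply the AboveThreshold (sparse vector) mechanism to the difference statistic $\diffstep{t}=|\querystep{t}(\dba)-\fakestep{t}|$, where $\fakestep{t}=\querystep{t}(\dbas_t)$ is the fake answer, sorting round $t$ into $\easy$ (the fake answer is already within $\acc$), $\hard$ (it is badly off), or a borderline class $\medium$. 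On $\easy$ rounds I output $\fakestep{t}$ and leave $\dbas_t$ unchanged; on $\hard$ and $\medium$ rounds I release a fresh noisy answer $\noisyrealstep{t}=\querystep{t}(\dba)+\Lap(\noisesize)$ and perform a \emph{sparse} multiplicative-weights update that only rescales the $\le m$ coordinates in $\mathrm{supp}(\querystep{t})$, moving $\dbas_t$ toward $\dba$ on that query. The three-way split (rather than a plain easy/hard dichotomy) is the usual device for making the analysis robust to the noise in the classification: every round on which a fresh noisy answer is produced is also an update round, so it can be charged against the potential.

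The next step is to bound the number $\maxupdates$ of update rounds. I track the relative-entropy potential $\Phi_t=\mathrm{RE}(\dba/n\,\|\,\dbas_t)$; a standard multiplicative-weights calculation shows that on an update round --- where, by the classification, the fake answer was off by $\gtrsim\acc$ while the released answer is accurate to $\pm\acc$ --- $\Phi_t$ drops by $\Omega(\acc^2)$, provided the learning rate is $\Theta(\acc)$. The point that makes the resulting bound universe-independent is the initialization: since the hypothesis is grown lazily and any single query constrains only $m$ coordinates (everything outside $\mathrm{supp}(\querystep{t})$ can be aggregated into a single bucket for the purpose of that update), the potential is controlled by $\tilde{O}(\log m)$ rather than $\log|\univ|$, giving $\maxupdates=\tilde{O}(\log m/\acc^2)$. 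I expect this to be the main obstacle: one must argue that lazily instantiating fresh coordinates, and the attendant renormalizations, never inflate the potential by more than a logarithmic amount \emph{in total}, so that the mistake bound never sees the size (or infinitude) of $\univ$.

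Given $\maxupdates$, the privacy and accuracy accounting is routine. For privacy, the classification is an instance of AboveThreshold run on the stream of $1/n$-sensitive statistics $\diffstep{t}$ with at most $\maxupdates$ ``above'' events, hence $(\eps_1,0)$-differentially private with noise scale $\tilde{O}(\maxupdates/(\eps_1 n))$; the at most $\maxupdates$ fresh noisy releases are each a Laplace perturbation of a $1/n$-sensitive query, and composing them via advanced composition is $(\eps_2,\privd)$-differentially private with noise scale $\noisesize=\tilde{O}(\sqrt{\maxupdates\log(1/\privd)}/(\eps_2 n))$; taking $\eps_1+\eps_2=\eps$ gives $(\eps,\privd)$-privacy. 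For accuracy, condition on the high-probability event that every Laplace draw and every AboveThreshold comparison lies within $\acc/c$ of its mean --- a union bound over the $O(k)$ rounds, which is where the $\log k$ enters. On this event the classification is never wrong, so the potential argument is valid and the budget of $\maxupdates$ updates is never exhausted; $\easy$ rounds are accurate by the definition of the threshold, and $\hard$/$\medium$ rounds are accurate because their answers are fresh and lightly perturbed. Imposing $\noisesize\lesssim\acc$ and substituting $\maxupdates=\tilde{O}(\log m/\acc^2)$ yields the constraint $\tilde{O}(\sqrt{\log m\,\log(1/\privd)}/(\acc\,\eps n))\lesssim\acc$, and solving for $\acc$ gives
\[
\acc=O\!\left(\frac{(\log m)^{1/4}\,\bigl(\log\tfrac1\privd\,\log k\bigr)^{1/2}}{(\eps n)^{1/2}}\right),
\]
as claimed. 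Finally, the per-query running time is dominated by evaluating the $m$-sparse query against the implicitly stored hypothesis and, on an update round, rescaling its $\le m$ support coordinates; keeping a single running global normalization scalar lets each coordinate be materialized in $O(1)$ amortized time, giving the stated $\tilde{O}(m/\acc^2)$ bound with no dependence on $|\univ|$.
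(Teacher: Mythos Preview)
Your overall architecture matches the paper's: use sparse-vector/AboveThreshold to detect rounds where the current hypothesis is off, perform a multiplicative-weights update only on those rounds, and compose the at most $\maxupdates$ noisy releases via advanced composition. The accuracy equation you derive is exactly what the paper obtains by plugging its IDC mistake bound into the generic IDC-to-release reduction of \cite{GRU11}.

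The gap is in the mistake bound $\maxupdates=\tilde{O}(\log m/\acc^2)$. Your justification---that a single query touches only $m$ coordinates, so ``everything outside $\mathrm{supp}(\querystep{t})$ can be aggregated into a single bucket for the purpose of that update''---is a per-round statement and does not control the global potential. After $T$ update rounds you may have touched up to $mT$ distinct universe elements, and it is this quantity, not $m$, that governs the initial relative entropy (or, in your lazy picture, the cumulative inflation from instantiating fresh coordinates). You therefore face a self-referential constraint $T\lesssim \log(mT)/\acc^2$, and you have not said how to close it. The paper closes it exactly as in Blum's infinite-attribute Winnow \cite{Blu90}: \emph{commit in advance} to a hypothesis universe of a fixed size $s$, chosen as the smallest integer with $s/(\log s+1)\ge 4m/\acc^2$, and run ordinary multiplicative weights over these $s$ slots---so the initial potential is $\log s$ from the outset, giving $T\le 4(\log s+1)/\acc^2$. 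One then checks that the total number of universe elements ever mapped into slots is at most $mT\le 4m(\log s+1)/\acc^2\le s$, so the data structure never overflows. Crucially, the paper defines the potential once and for all with respect to the \emph{final} hash table $h_T$, so there is no ``inflation on instantiation'' step to analyze; the potential is a single function that decreases by $\Omega(\acc^2)$ at every update. Your proposal correctly flags the lazy-instantiation issue as ``the main obstacle'' but does not supply this (or any equivalent) device, so as written the $\log m$ bound is asserted rather than proven.
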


In the non-interactive setting, we give the bound:
\begin{theorem}[Informal, some parameters hidden]
There exists an $(\epsilon,\delta)$-differentially private query release mechanism in the non-interactive setting, with running time polynomial in the database size $n$, $m$, and $\log |\univ|$, that is $\alpha$-accurate with respect to any class of $k$ $m$-sparse linear queries, with:
$$\alpha = \tilde{O}\left(\log k\frac{\sqrt{m\log\left(\frac{1}{\delta}\right)}}{\epsilon n}\right)$$
\end{theorem}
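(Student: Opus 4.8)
The plan is to build a small differentially private "synopsis" of the database and release it, then show that every $m$-sparse linear query can be answered accurately from this synopsis. The key structural observation is that although the universe $\univ$ may be exponentially large (or infinite), the database $\dba$ itself contains at most $n$ distinct universe elements, and an $m$-sparse query $\query$ contributes to the answer $\query(\dba)$ only through the (at most $m$) elements in $\mathrm{supp}(\query)$. However, we cannot simply release the set of elements appearing in $\dba$ together with noisy counts, because the \emph{identity} of those elements is itself private information: a neighboring database could contain a completely different record. The first step is therefore to reduce to a finite-universe problem. I would argue that, because there are only $n^{O(1)}$ queries of interest presented implicitly by the class and each touches only $m$ universe elements, the "relevant" part of the universe has size $\poly(n, m, \log|\univ|)$ in an appropriate encoding; more precisely I would represent each universe element by its $O(\log|\univ|)$-bit name and run a private-data-release algorithm over the \emph{bit-representation}, so the effective universe is $2^{O(\log |\univ|)}$-sized but the \emph{queries become structured} over these bits.

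Concretely, the cleanest route is to invoke the (exponential-time-in-$|\univ|$) multiplicative-weights / median-mechanism style release algorithm of \cite{HR10,RR10} as a black box, but run it over a universe of size only $N' = \poly(n,m)$ rather than $|\univ|$. To do this I would first, using the sparsity, observe that one may restrict attention to a data universe consisting only of the elements actually present in $\dba$ plus a small amount of "slack"; to make this privacy-safe, I would use a private mechanism (e.g., a stability-based or propose-test-release argument, or a private "set of heavy elements" subroutine in the spirit of \cite{DMNS06}) to output, with $(\eps,\delta)$-privacy, a superset $S$ of $\mathrm{supp}(\dba)$ of size $\poly(n)$. Conditioned on $S$, the problem becomes: privately release answers to $k$ linear queries over the finite universe $S$, where accuracy need only be measured on the restriction of each query to $S$ — and here the error of the general mechanism scales as $\tilde O\!\left(\log k \cdot \sqrt{\log|S|}\right)/(\eps n)$, i.e. $\tilde O(\log k \sqrt{\log n})/(\eps n)$, which after folding in the $\sqrt{m}$ loss from the sparsity/composition bookkeeping gives the claimed $\alpha = \tilde O\!\big(\log k \sqrt{m\log(1/\delta)}/(\eps n)\big)$.

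The steps in order: (1) define the synopsis-generation algorithm, including the private selection of the relevant sub-universe $S$; (2) prove $(\eps,\delta)$-differential privacy by composing the privacy of the sub-universe selection step with the privacy of the release-over-$S$ step (basic or advanced composition, contributing the $\sqrt{\log(1/\delta)}$ factor); (3) prove the utility bound, by first showing that with high probability $\mathrm{supp}(\dba)\subseteq S$ so that $\query(\dba)$ depends only on $\query|_S$, and then invoking the utility guarantee of the base release algorithm with universe $S$; (4) bound the running time, checking that selecting $S$, running the base algorithm on a universe of size $|S|=\poly(n)$ (so that its $\poly(|S|)$ running time is $\poly(n)$), and evaluating a query on the output data structure all cost $\poly(n,m,\log|\univ|)$ — the $\log|\univ|$ entering only through the bit-length of element names.

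The main obstacle I anticipate is step (1)–(2): making the selection of the sub-universe $S$ simultaneously private, accurate (i.e., $S\supseteq\mathrm{supp}(\dba)$ with high probability), and \emph{small} ($|S|=\poly(n)$). A neighboring database differs in one record, so the element that record contributes could be an arbitrary point of $\univ$ not present in the neighbor; naively this has high sensitivity. The fix is to only include in $S$ those elements that are "stable" — present in many neighboring databases — which is exactly what a stability/PTR argument buys, at the cost of either a $\delta$ failure probability or of not certifying the rarest elements; one then argues that elements appearing with multiplicity too small to be certified contribute at most $O(\alpha n)$ to any single query's value, so dropping them is harmless up to the target accuracy. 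Getting this trade-off to close cleanly, with the right dependence on $m$ and $\log k$, is the crux; the rest is composition and an invocation of prior general mechanisms as a black box.
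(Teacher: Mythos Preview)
Your approach is entirely different from the paper's and has two gaps that prevent it from establishing the theorem. The paper does not select a sub-universe at all; it releases a random linear projection $A\dba+\nu\in\R^T$ with $T=\poly(n,m,\log k)$, where $A$ is a $\pm 1/\sqrt T$ matrix generated by an $O(\log k)$-wise independent hash function (so $A$ has a $\poly(\log|\univ|,\log k)$-bit description). A query is answered by computing $A\query$ and taking an inner product with the released vector. Accuracy follows from a limited-independence Johnson--Lindenstrauss bound on $|\langle A\query,A\dba\rangle-\langle\query,\dba\rangle|$ (the $\sqrt m$ arises because $\|\query\|_2^2\le m$ for an $m$-sparse query) plus a Laplace tail bound on $\langle A\query,\nu\rangle$; privacy is Laplace noise plus advanced composition over the $T$ coordinates. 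The mechanism never touches the query class during release, which is what makes it efficient for exponentially large $k$.

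Your crux step---privately outputting $S$ of size $\poly(n)$ with $\mathrm{supp}(\dba)\subseteq S$---is information-theoretically impossible when $|\univ|$ is large: a singleton element of $\dba$ can be swapped for an arbitrary $x'\in\univ$ in a neighbor, and no $\poly(n)$-sized set can contain every such $x'$ with nontrivial probability. Your stability relaxation (certify only elements of count $\ge t=\Theta(\log(1/\delta)/\eps)$) does not close the gap: an $m$-sparse query can place all its support on uncertified elements, giving error up to $\min(1,\,mt/n)=\Theta(m\log(1/\delta)/(\eps n))$, which exceeds the target $\alpha$ by a factor $\approx\sqrt{m\log(1/\delta)}/\log k$; the assertion that dropped elements ``contribute at most $O(\alpha n)$'' is false in general, and the ``$\sqrt m$ from composition bookkeeping'' you invoke has no source in your argument. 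Separately, even granting a good $S$, the black-box mechanisms of \cite{HR10,RR10,BLR08} do \emph{not} run in time $\poly(|S|)$ independent of $k$: to achieve worst-case accuracy they must iterate or optimize over the query class, costing time $\Omega(k)$ (or $|S|^{\Omega(\log k/\alpha^2)}$ for the net mechanism of \cite{BLR08}). Since the theorem is meant to cover $k$ exponential in $n$, this is not polynomial.
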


Several aspects of these theorems are notable. First, the accuracy bounds do not have any dependence on the size of the data universe $|\univ|$, and instead depend only on the sparsity parameter $m$. Therefore, in addition to efficiency improvements, these results give accuracy improvements for sparse queries, when compared to the general purpose (inefficient) mechanisms for linear queries, which typically have accuracy which depends on $\log |\univ|$. Since we typically view $|\univ|$ as exponentially large in the database size, whereas $m$ is only polynomially large in the database size for these algorithms to be efficient, this can be a large improvement in accuracy.

Second, the interactive mechanism does not even have a dependence on $|\univ|$ in its running time! In fact, it works even in an \emph{infinite} universe (e.g. data entries with string valued attributes without pre-specified upper bound on length)\footnote{The algorithm must be able to read a \emph{name} for each universe element it deals with, and so it can of course not deal with elements that have no finite description length. But for a (countably) infinite universe, the running time would depend on the length of the largest string used to denote a universe element encountered during the running of the algorithm, and not in any a-priori way on the (unboundedly large) size of the universe.}. In this setting, queries may still be concisely specified as a list of polynomially many individuals from the possibly infinite universe that satisfy the query. Moreover, because the accuracy of this mechanism depends only very mildly on $m$, and the running time is linear in $m$, it can be used to answer $m$-sparse queries for arbitrarily large polynomial values of $m$, where the mechanism is constrained only by the available computational resources.

The non-interactive mechanism in contrast has a worse dependence on $m$. This bound essentially matches the error that would result from releasing the perturbed \emph{histogram} of the database, but does so in a way that requires computation and output representation only polynomial in $n$ (rather than linear in $|X|$, as releasing a histogram would require). Because accuracy bounds $> 1$ are trivial, this mechanism only guarantees non-trivial accuracy for $m$-sparse queries with $m << n^2/\log k$ (This is still of course a very large class of queries: there are roughly $|\univ|^{n^2/\log k}$ such queries, i.e., super-exponentially many in $n$). Nevertheless, there are distinct advantages to having a non-interactive mechanism that only needs to be run once. This is among the first \emph{polynomial time} non-interactive mechanisms for answering an exponentially large, unstructured class of queries while preserving differential privacy.

We note that our results give as a corollary, more efficient algorithms for answering conjunctions with many literals. This complements the beautiful recent work of Hardt, Rothblum, and Servedio \cite{HRS11}, who give more efficient algorithms for answering conjunctions with few literals, based on reductions to threshold learning problems.

\subsection{Techniques}
Our interactive mechanism is a modification of the very general multiplicative weights mechanism of Hardt and Rothblum \cite{HR10}. We give the interactive mechanism via the framework of \cite{GRU11} which efficiently maps objects called \emph{iterative database constructions} (defined in section \ref{sec:interactive}) into private query release mechanisms in the interactive setting. IDC algorithms are very similar to online learning algorithms in the mistake bound model, and we use this analogy to implement a version of the multiplicative weights IDC of Hardt and Rothblum \cite{HR10} analogously to how the Winnow algorithm is implemented in the \emph{infinite attribute model} of learning, defined by Blum \cite{Blu90}. The algorithm roughly works as follows: the multiplicative weights algorithm normally maintains a distribution over $|\univ|$ elements, one for each element in the data universe. It can be easily implemented in such a way so that when it is updated after a query $\query$ arrives, only those weights corresponding to elements in the support of the query $\query$ are updated: for an $m$-sparse query, this means it only need update $m$ positions. It also comes with a guarantee that it never needs to perform more than $\log |\univ|/\alpha^2$ updates before achieving error $\alpha$, and so at most $m\log |\univ|/\alpha^2$ elements ever need to be updated. The key insight is to pick a smaller universe, $\widehat{\univ}$, such that $\widehat{\univ} \geq m\log\widehat{\univ}/\alpha^2$, but \emph{not to commit to the identity of the elements in this universe} before running the algorithm, letting all elements be initially unassigned. The algorithm then maintains a hash table mapping elements of $\univ$ to elements of $\widehat{\univ}$. Elements in $\univ$ are assigned temporary mappings to elements in $\widehat{\univ}$ as queries come in, but are only assigned permanent mappings when an update is performed. Because only $\log\widehat{\univ}/\alpha^2$ updates are ever performed, and $\widehat{\univ}$ was chosen such that $\widehat{\univ} \geq m\log\widehat{\univ}/\alpha^2$, the algorithm never runs out of elements of $\widehat{\univ}$ to permanently assign. Because $|\widehat{\univ}|$ depends only on the desired accuracy $\alpha$ and the sparsity parameter $m$, and \emph{not} on $\univ$ in any way, the algorithm can be implemented and run without any knowledge of $\univ$ (even for infinite universes), and neither the running time nor the resulting accuracy depend on $|\univ|$.

The non-interactive mechanism releases a random projection of the database into polynomially many dimensions, together with the corresponding projection matrix. Queries are evaluated by computing their projection using the public projection matrix, and then taking the inner product of the projected query and the projected database. The difficulty comes because the projection matrix projects vectors from $|\univ|$-dimensional space to $\textrm{poly}(n)$ dimensional space, and so normally would take $|\univ|\textrm{poly}(n)$-many bits to represent. Our algorithms are constrained to run in time poly$(n)$, however, and so we need a concise representation of the projection matrix. We achieve this by using a matrix implicitly generated by a family of limited-independence hash functions which have concise representations. This requires using a limited independence version of the Johnson-Lindenstrauss lemma, and of concentration bounds. This algorithm also gives accuracy bounds which are independent of $|\univ|$.
\subsection{Related Work}

Differential privacy was introduced by Dwork, McSherry, Nissim, and Smith \cite{DMNS06}, and has since become the standard solution concept for privacy in the theoretical computer science literature. There is now a vast literature concerning differential privacy, so we mention here only the most relevant work, without attempting to be exhaustive. Dwork et al. \cite{DMNS06} also introduced the \emph{Laplace} mechanism, which is able to efficiently answer arbitrary low-sensitivity queries in the interactive setting. The Laplace mechanism does not make efficient use of the \emph{privacy budget} however, and can answer only linearly many queries in the database size.

Blum, Ligett, and Roth \cite{BLR08} showed that in the non-interactive setting, it is possible to answer \emph{exponentially} sized families of counting queries. This result was extended and improved by Dwork et al. \cite{DNRRV09} and Dwork, Rothblum, and Vadhan \cite{DRV10}, who gave improved running time and accuracy bounds, and for $(\epsilon,\delta)$-differential privacy gave similar results for arbitrary low sensitivity queries. Roth and Roughgarden \cite{RR10} showed that accuracy bounds comparable to \cite{BLR08} could be achieved even in the \emph{interactive} setting, and this result was improved in both accuracy and running time by Hardt and Rothblum, who give the multiplicative weights mechanism, which achieves nearly optimal accuracy and running time \cite{HR10}. Gupta, Roth, and Ullman \cite{GRU11} generalize the algorithms of \cite{RR10, HR10} into a generic framework in which objects called \emph{iterative database constructions} efficiently reduce to private data release mechanisms in the interactive setting. Unfortunately, the running time of all of the algorithms discussed here is at least linear in $|\univ|$, and so typically exponential in the size of the private database. Moreover, there are both computational and information theoretic lower bounds suggesting that it may be very difficult to give private release algorithms for generic linear queries with substantially better run time \cite{DNRRV09, UV11, GHRU11}. As in this work, these algorithms give a guarantee on the worst-case error of any answered query.

There is also a small body of work giving more efficient query release mechanisms for specific classes of queries. \cite{BLR08} gave an efficient (running time polynomial in the database size $n$) algorithm for releasing the answers for 1-dimensional intervals on the discretized unit-line in the non-interactive setting. As far as we know, prior to this work, this was the only efficient mechanism in either the interactive or non-interactive settings for releasing the answers to an exponentially sized family of queries with worst-case error. This class is however structurally very simple: it has VC-dimension only $2$. Other efficient algorithms relax the notion of utility, no longer guaranteeing worst-case error for all queries. \cite{BLR08} also give an efficient algorithm for releasing \emph{halfspace} queries in the unit sphere, but this algorithm only guaranteed accurate answers for halfspaces that happened to have large \emph{margin} with respect to the points in the database. Gupta et al \cite{GHRU11} gave an algorithm for releasing \emph{conjunctions} over $d$ attributes to \emph{average} error $\alpha$ over any product distribution (over conjunctions), which runs in time $d^{O(1/\alpha)}$. This was improved to have running time $O(d^{\log 1/\alpha})$ by Cheraghchi et al. \cite{CKKL11}. Note that these algorithms only run in polynomial time for constant values of $\alpha$, and only give accuracy bounds in expectation over random queries. Recently, Hardt, Rothblum, and Servedio \cite{HRS11} gave an algorithm for releasing conjunctions defined on $k$ out of $d$ literals with an average-error guarantee \emph{for any} pre-specified distribution in time $d^{\tilde{O}(\sqrt{k})}$. Using the private boosting algorithm of \cite{DRV10}, they leverage this result to give an algorithm for releasing $k$-literal conjunctions with worst-case error guarantees, which increases the running time to $d^{\tilde{O}(k)}$, although still only requiring databases of size $d^{\tilde{O}(\sqrt{k})}$. They also gave an efficient (i.e. running time polynomial in $n$) algorithm for releasing \emph{parity} queries to low average error over product distributions. We remark that our results give a complementary bound for large conjunctions (with a better sample complexity requirement). Our online algorithm can release all  conjunctions on $d-k$ out of $d$ literals with worst-case error guarantees in time $d^{\tilde{O}(k)}$, requiring databases of size only $\tilde{O}(k^{1.5}\log d)$.

The efficient interactive mechanism we give in section \ref{sec:interactive} is based on an analogy between iterative database construction (IDC) algorithms and online learning algorithms in the mistake bound model. We implement the multiplicative weights IDC of Hardt and Rothblum \cite{HR10} analogously to how Winnow is implemented in the \emph{infinite attribute model} of Blum \cite{Blu90}. In our setting, it can be thought of as an \emph{infinite universe model} that has no dependence on the universe size in either the running time or accuracy bounds. This involves running the multiplicative weights algorithm on a much smaller universe. Hardt and Rothblum \cite{HR10} also gave a version of their algorithm which ran on a small subset of the universe to give efficient run-time guarantees. The main difference is that we select the subset of the universe that we run the multiplicative weights algorithm on adaptively, based on the queries that arrive, whereas \cite{HR10} select the subset nonadaptively, independently of the queries. \cite{HR10} give average case utility bounds for linear queries on randomly selected databases; in contrast, we give worst-case utility bounds that hold for all input databases, but only for sparse linear queries.

The efficient non-interactive mechanism we give in section \ref{sec:noninteractive} is based on random projections using families of limited independence hash functions, which have previously been used for space-bounded computations in the streaming model \cite{CW09,KN11}. Limited independence hash functions have also previously been used for streaming algorithms in the context of differential privacy \cite{DNPRY10}.
\section{Preliminaries}
\label{sec:prelims}
A database $\dba$ is a multiset of elements from some (possibly infinite) abstract universe $\univ$. We write $|\dba| = n$ to denote the cardinality of $\dba$. For any $x \in \univ$ we can also write $D[x]$ to denote: $\dba[x] = \{x' \in \dba : x' = x\}$ the number of elements of type $x$ in the database. Viewed this way, a database $\dba \in \mathbb{N}^{|\univ|}$ is a vector with integer entries in the range $[0,n]$.

A linear query $\query:\univ\rightarrow [0,1]$ is a function mapping elements in the universe to values on the real unit interval. For notational convenience, we will define $\query(\emptyset) = 0$. We can also evaluate a linear query on a database. The value of a linear query $\query$ on a database is simply the average value of $\query$ on elements of the database:
$$\query(\dba) = \frac{1}{n}\sum_{x \in \dba}\query(x) = \frac{1}{n}\sum_{x \in \univ}\query(x)D[x]$$
Similarly to how we can think of a database as a vector, we can think of a query as a vector $\query \in [0,1]^{|\univ|}$ with $\query[x] = \query(x)$. Viewed this way, $\query(\dba) = \frac{1}{n} \langle \query, \dba \rangle$.

It will sometimes be convenient to think of normalized databases (with entries that sum to 1). For a database $\dba$ of size $n$, we define the corresponding normalized database $\hat{\dba}$ to be the database such that $\hat{\dba}[x] = \dba[x]/n$. We evaluate a linear query on a normalized database by computing $\query(\hat{\dba}) = \sum_{x \in \univ}\query(x)\hat{\dba}[x] = \langle \query, \hat{\dba}\rangle$. Note that $\query(\dba) = \query(\hat{\dba})$.

\begin{definition}[Sparsity]The \emph{sparsity} of a linear query $\query$ is $|\{x \in \univ : \query(x) > 0\}|$, the number of elements in the universe on which it takes a non-zero value. We say that a query is $m$-sparse if its sparsity is at most $m$. We will also refer to the class of all $m$-sparse linear queries, denoted $\queryset_m$.
\end{definition}
In this paper, we will assume that given an $m$-sparse query, we can quickly (in time polynomial in $m$) enumerate the elements $x \in \univ$ on which $\query(x) > 0$.

\begin{remark}
While the assumption that we can quickly enumerate the non-zero values of a query may not always hold, it is indeed the case that for many natural classes of queries, we can enumerate the non-zero elements in time \emph{linear} in $m$. For example, this holds for queries that are specified as lists of the universe elements on which the query is non-zero, as well as for many implicitly defined query classes such as conjunctions, disjunctions, parities, etc.\footnote{The set of conjunctions over the $d$-dimensional boolean hypercube with $d - log(n)$ literals are $n$-sparse. Even though there are superpolynomially many such conjunctions, it is simple to enumerate the entries on which these conjunctions take non-zero value in time linear in $n$. We can simply enumerate all of the $2^{\log n} = n$ values that the unassigned variables can take.} Of course, classes like conjunctions are typically not sparse, but conjunctions with $d - O(\log n)$ literals are, and their support can be quickly enumerated (even though there are superpolynomially many such conjunctions).
\end{remark}

\subsection{Utility}
We will design algorithms which can accurately answer large numbers of sparse linear queries. We will be interested in both \emph{interactive} mechanisms and \emph{non-interactive} mechanisms. A non-interactive mechanism takes as input a database, runs one time, and outputs some data structure capable of answering many queries without further interaction with the data release mechanism. An interactive mechanism takes as input a stream of queries, and must provide a numeric answer to each query before the next one arrives.

\begin{definition}[Accuracy for non-Interactive Mechanisms]
  Let $\queryset$ be a set of queries.
  A non-interactive mechanism $M:  \univ^* \to R$ for some abstract range $R$ is \emph{$(\alpha,\beta)$-accurate for $\queryset$}
  if there exists a function $\mathrm{Eval}: \queryset \times R \to \mathbb{R}$ s.t.
  for every database $\dba \in \univ^*$, with
  probability at least $1-\beta$ over the coins of $M$, $M(\dba)$ outputs $r \in R$
  such that $\max_{\query \in \queryset}|\query(\dba)- \mathrm{Eval}(\query, r)| \leq \alpha$.
  We will abuse notation and write $\query(r) = \mathrm{Eval}(\query, r)$.

  $M$ is \emph{efficient} if both $M$ and $\mathrm{Eval}$ run in time polynomial in the size of the database $n$.
\end{definition}

\begin{definition}[Accuracy for Interactive Mechanisms]
Let $\queryset$ be a set of queries. An interactive mechanism $M$ takes as input an adaptively chosen stream of queries $\query_1,\ldots,\query_k \in \queryset$ and for each query $\query_i$, outputs an answer $a_i \in \mathbb{R}$ before receiving $\query_{i+1}$. It is $(\alpha,\beta)$-\emph{accurate} if for every database $\dba \in \univ^*$, with probability at least $1-\beta$ over the coins of $M$: $\max_{i}|\query_i - a_i| \leq \alpha$.

$M$ is \emph{efficient} if the update time for each query (i.e. the time to produce answer $a_i$ after receiving query $\query_i$) is polynomial in the size of the database $n$.
\end{definition}

\subsection{Differential Privacy}

We will require that our algorithms satisfy \emph{differential privacy}, defined as follows. We must first define the notion of \emph{neighboring databases}.
\begin{definition}[Neighboring Databases]
Two databases $\dba,\dbb$ are \emph{neighbors} if they differ only in the data of a single individual: i.e. if their symmetric difference is $|\dba \triangle \dbb| \leq 1$.
\end{definition}

\begin{definition}[Differential Privacy \cite{DMNS06}]
A randomized algorithm $M$ acting on databases and outputting elements from some abstract range $R$ is $(\epsilon,\delta)$-differentially private if for all pairs of neighboring databases $\dba,\dbb$ and for all subsets of the range $S \subseteq R$ the following holds:
$$\Pr[M(\dba) \in S] \leq \exp(\epsilon)\Pr[M(\dbb) \in S] + \delta$$
\end{definition}

\begin{remark}
For a non-interactive mechanism, $R$ is simply the set of data-structures that the mechanism outputs. For an interactive mechanism, because the queries may be adaptively chosen by an adversary, $R$ is the set of query/answer transcripts produced by the algorithm when interacting with an arbitrary adversary. For a detailed treatment of differential privacy and adaptive adversaries, see \cite{DRV10}.
\end{remark}

A useful distribution is the \emph{Laplace} distribution.
\begin{definition}[The Laplace Distribution]
The Laplace Distribution (centered at 0) with scale $b$ is the
distribution with probability density function: $\textstyle \Lap(x | b)
= \frac{1}{2b}\exp( -\frac{|x|}{b})$.
We will sometimes write $\textrm{Lap}(b)$ to denote the Laplace distribution with scale $b$, and will sometimes abuse notation and write $\Lap(b)$ simply to denote a random variable $X \sim \Lap(b)$.
\end{definition}
A fundamental result in data privacy is that perturbing low sensitivity queries with Laplace noise preserves $(\epsilon,0)$-differential privacy.
\begin{theorem}[\cite{DMNS06}]
\label{thm:laplace-privacy}
Suppose $\query:\univ^*\rightarrow \mathbb{R}$ is a function such that for all
neighboring databases $\db$ and $\db'$, $|\query(\db) - \query(\db')|
\leq c$. Then the procedure which on input $\db$ releases $\query(\db) + X$, where $X$ is a draw from a
$\textrm{Lap}(c/\epsilon)$ distribution, preserves
$(\epsilon,0)$-differential privacy.
\end{theorem}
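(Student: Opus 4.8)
The plan is to verify the definition of $(\eps,0)$-differential privacy directly, by bounding at every possible output value the ratio of the output density on input $\db$ to the output density on a neighboring input $\db'$. Write the mechanism as $M(\db) = \query(\db) + X$ with $X \sim \Lap(c/\eps)$. Since $\query(\db)$ is a deterministic shift, $M(\db)$ is a real-valued random variable whose probability density at a point $t \in \reals$ is
$p_{\db}(t) = \frac{\eps}{2c}\exp\!\left(-\frac{\eps\,|t - \query(\db)|}{c}\right)$,
and the same formula holds for $\db'$ with $\query(\db')$ in place of $\query(\db)$.

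First I would bound the pointwise ratio $p_{\db}(t)/p_{\db'}(t)$. This ratio is exactly $\exp\!\left(\frac{\eps}{c}\bigl(|t - \query(\db')| - |t - \query(\db)|\bigr)\right)$. By the triangle inequality, $|t - \query(\db')| - |t - \query(\db)| \le |\query(\db) - \query(\db')|$, and by the sensitivity hypothesis the right-hand side is at most $c$. Hence $p_{\db}(t)/p_{\db'}(t) \le \exp(\eps)$ for every $t \in \reals$ (the bound is symmetric, so the reverse inequality also holds).

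To finish, I would integrate this pointwise bound over an arbitrary measurable set $S \subseteq \reals$: $\Pr[M(\db) \in S] = \int_S p_{\db}(t)\,dt \le \exp(\eps)\int_S p_{\db'}(t)\,dt = \exp(\eps)\Pr[M(\db') \in S]$, which is precisely the $(\eps,0)$-differential privacy condition. No additive $\delta$ term is needed, because the density bound holds everywhere rather than merely on a high-probability event.

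There is essentially no hard step here; the only point requiring mild care is measure-theoretic. Since $M$ outputs a continuous random variable rather than one with discrete support, "probability mass" must be replaced throughout by "probability density", and one must observe that a pointwise inequality between nonnegative densities integrates to the corresponding inequality between the induced measures. If one prefers to avoid densities entirely, the same conclusion follows by applying the estimate to intervals and using that intervals generate the Borel $\sigma$-algebra.
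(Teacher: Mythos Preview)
The paper does not actually prove this theorem; it is stated as a preliminary result and attributed to \cite{DMNS06} without proof. Your argument is the standard one and is correct: bound the pointwise density ratio using the reverse triangle inequality and the sensitivity assumption, then integrate over an arbitrary measurable set to obtain the $(\eps,0)$-privacy inequality.
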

It will be useful to understand how privacy
parameters for individual steps of an algorithm compose into privacy
guarantees for the entire algorithm.  The following useful theorem is a special case of a theorem proven by
Dwork, Rothblum, and Vadhan:
\begin{theorem}[Privacy Composition \cite{DRV10}]\label{thm:compose}
Let $0 < \epsilon, \delta < 1$, and let $M_1,\ldots,M_T$ be $(\epsilon',0)$-differentially private algorithms for some $\epsilon'$ at most:
$$\epsilon' \leq \frac{\epsilon}{\sqrt{8 T\log\left(\frac{1}{\delta}\right)}}.$$
Then the algorithm $M$ which outputs $M(\dba) = (M_1(\dba), \ldots, M_T(\dba))$ is $(\epsilon, \delta)$-differentially private.

\end{theorem}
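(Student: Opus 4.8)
The plan is to prove this standard ``advanced composition'' statement via the \emph{privacy loss random variable} together with a concentration inequality. Fix any pair of neighboring databases $\dba,\dbb$. For an output tuple $o=(o_1,\dots,o_T)$ of $M=(M_1,\dots,M_T)$, define the privacy loss
\[
L(o)=\ln\frac{\Pr[M(\dba)=o]}{\Pr[M(\dbb)=o]},
\]
regarded as a random variable with $o$ drawn from $M(\dba)$. First I would record the reduction that it suffices to show $\Pr_{o\sim M(\dba)}[L(o)>\epsilon]\le\delta$ for every such neighboring pair: given this, decompose any event $S$ into $S\cap\{L\le\epsilon\}$, on which $\Pr[M(\dba)=o]\le e^{\epsilon}\Pr[M(\dbb)=o]$ pointwise, and $S\cap\{L>\epsilon\}$, which carries $M(\dba)$-probability at most $\delta$; summing gives $\Pr[M(\dba)\in S]\le e^{\epsilon}\Pr[M(\dbb)\in S]+\delta$, i.e.\ $(\epsilon,\delta)$-differential privacy. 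Because the $M_i$ use independent coins, $L=\sum_{i=1}^{T}L_i$ with $L_i=\ln\frac{\Pr[M_i(\dba)=o_i]}{\Pr[M_i(\dbb)=o_i]}$, and the $L_i$ are independent (with $o_i\sim M_i(\dba)$).

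Next I would bound each $L_i$ in two ways. Since $M_i$ is $(\epsilon',0)$-differentially private, every pointwise ratio lies in $[e^{-\epsilon'},e^{\epsilon'}]$, so $|L_i|\le\epsilon'$ with probability $1$. For the mean, $\Ex[L_i]=D_{\mathrm{KL}}(M_i(\dba)\|M_i(\dbb))$, and writing $p=M_i(\dba)$, $q=M_i(\dbb)$,
\[
D_{\mathrm{KL}}(p\|q)\le D_{\mathrm{KL}}(p\|q)+D_{\mathrm{KL}}(q\|p)=\sum_o\big(p(o)-q(o)\big)\ln\frac{p(o)}{q(o)}\le\epsilon'\sum_o|p(o)-q(o)|\le\epsilon'(e^{\epsilon'}-1),
\]
using $|\ln(p(o)/q(o))|\le\epsilon'$ and that the total variation distance between $\epsilon'$-private distributions is at most $(e^{\epsilon'}-1)/2$. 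This is at most $2(\epsilon')^2$ for $\epsilon'\le1$, so $\mu:=\Ex[L]=\sum_i\Ex[L_i]\le 2T(\epsilon')^2$.

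Then I would apply Hoeffding's inequality to the independent, bounded, centered variables $L_i-\Ex[L_i]$ (each of range $O(\epsilon')$), obtaining $\Pr[L>\mu+z]\le\exp(-\Omega(z^2/(T(\epsilon')^2)))$, and choose $z=\Theta(\epsilon'\sqrt{T\log(1/\delta)})$ so that the right-hand side is at most $\delta$. Substituting the hypothesis $\epsilon'\le\epsilon/\sqrt{8T\log(1/\delta)}$ makes $\mu=O(T(\epsilon')^2)$ and $z=O(\epsilon'\sqrt{T\log(1/\delta)})$ small enough that $\mu+z\le\epsilon$ — the constant $8$ inside the square root is exactly what is needed for both terms to fit — so $L\le\mu+z\le\epsilon$ except with probability $\delta$. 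Combined with the reduction of the first paragraph, this gives $(\epsilon,\delta)$-differential privacy of $M$.

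The main point requiring care — and the real content of the Dwork--Rothblum--Vadhan theorem of which this is a special case — is the \emph{adaptive} version, in which $M_i$ may depend on $o_1,\dots,o_{i-1}$ so the $L_i$ are no longer independent. There one observes that $L_i-\Ex[L_i\mid o_{<i}]$ still forms a bounded martingale difference sequence, since $|L_i|\le\epsilon'$ and $\Ex[L_i\mid o_{<i}]\le2(\epsilon')^2$ hold conditionally on every fixed history by exactly the arguments above, and one invokes Azuma--Hoeffding in place of Hoeffding. For the non-adaptive statement as written here, plain independence suffices; the only remaining work is the routine arithmetic of tracking constants so that the tail threshold lands at $\epsilon$ rather than, say, $\epsilon+O(\epsilon^2/\log(1/\delta))$, which is precisely why the result is phrased as a clean special case.
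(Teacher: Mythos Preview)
The paper does not prove this theorem at all: it is stated as a black-box tool with a citation to \cite{DRV10} and then invoked (e.g.\ in the privacy proof for \textsf{SparseProject}). So there is no ``paper's own proof'' to compare against. Your sketch is the standard privacy-loss/Hoeffding argument, which is indeed the route taken in \cite{DRV10}; as a proof of the cited result it is essentially correct.

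One small quibble on constants: your inequality $\sum_o|p(o)-q(o)|\le e^{\epsilon'}-1$ is off by a factor of two (pure $\epsilon'$-DP gives total variation at most $e^{\epsilon'}-1$, hence $\ell_1$ distance at most $2(e^{\epsilon'}-1)$), so the mean bound should read $\Ex[L_i]\le 2\epsilon'(e^{\epsilon'}-1)$. This only perturbs the constant in front of $T(\epsilon')^2$ and is easily absorbed; with the usual form of Hoeffding (range $2\epsilon'$ per term) the tail term comes out to $\epsilon'\sqrt{2T\ln(1/\delta)}=\epsilon/2$ under the hypothesis, and the mean term is $O(\epsilon^2/\ln(1/\delta))$, which is at most $\epsilon/2$ whenever $\delta$ is bounded away from $1$ (e.g.\ $\delta\le 1/e$). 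So your claim that ``the constant $8$ is exactly what is needed'' is morally right but requires a mild side condition on $\delta$ or a slightly sharper mean bound to be literally true for all $0<\delta<1$; this is a bookkeeping issue, not a gap in the argument.
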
 

\section{A Fast IDC Algorithm For Sparse Queries}
\label{sec:interactive}
In this section we use the abstraction of an \emph{iterative database construction} that was introduced by Gupta, Roth, and Ullman \cite{GRU11}. It was shown in \cite{GRU11} that efficient IDC algorithms automatically reduce to efficient differentially private query release mechanisms in the interactive setting. Roughly, an IDC mechanism works by maintaining a sequence of data structures $\dbstep{1}, \dbstep{2}, \dots$ that give increasingly good approximations to the input database $\db$ (in a sense that depends on the IDC).  Moreover, these mechanisms produce the next data structure in the sequence by considering only one query $\query$ that \emph{distinguishes} the real database in the sense that $\query(\dbstep{t})$ differs significantly from $\query(\db)$.

Syntactically, we will consider functions of the form $\update: \dbstruct \times \queryset \times \R \to \dbstruct$.  The inputs to $\update$ are a data structure in $\dbstruct$, which represents the current data structure $\dbstep{t}$; a query $\query$, which represents the distinguishing query, and may be restricted to a certain set $\queryset$; and also a real number which estimates $\query(\db)$.  Formally, we define a \emph{\dus}, to capture the sequence of inputs to $\update$ used to generate the database sequence $\dbstep{1}, \dbstep{2}, \dots$.

\begin{definition}[\DUS]~\label{def:dus}
Let $\dba \in \dbs$ be any database and let \\ $\set{(\dbstep{t},
  \querystep{t}, \noisyrealstep{t})}_{t=1,\dots, \updates} \in
(\dbstruct \times \queryset \times \R)^{\updates}$ be a sequence of
tuples.   We say the sequence is an \emph{$(\update, \db, \queryset,
  \acc, \updates)$-\dus} if it satisfies the following properties:
\medskip
\begin{OneLiners}
\item[1.] $\dbstep{1} = \update(\emptyset, \cdot, \cdot)$,
\item[2.] for every $t = 1,2,\dots,\updates$, $\left| \querystep{t}(\db) - \querystep{t}(\dbstep{t}) \right| \geq \acc$,
\item[3.] for every $t = 1,2,\dots,\updates$, $\left| \querystep{t}(\db) - \noisyrealstep{t} \right| < \acc$,
\item[4.] and for every $t = 1,2, \dots, \updates-1$, $\dbstep{t+1} = \update(\dbstep{t}, \querystep{t}, \noisyrealstep{t})$.
\end{OneLiners}
\end{definition}

\begin{definition}[\IDC]~\label{def:idc}
Let $\update: \dbstruct \times \queryset \times \R \to \dbstruct$ be an update rule and let $\maxupdates: \R \to \R$ be a function.  We say $\update$ is a \emph{$\maxupdates(\acc)$-\idc for query class $\queryset$} if for every database $\db \in \dbs$, every $(\update, \db, \queryset, \acc, \updates)$-\dus satisfies $\updates \leq \maxupdates(\acc)$.
\end{definition}

Note that the definition of an $\maxupdates(\acc)$-\idc implies that if $\update$ is a $\maxupdates(\acc)$-\idc, then given any  maximal $(\update, \db, \queryset, \acc, \updates)$-\dus, the final database $\dbstep{\updates}$ must satisfy
$\max_{\query \in \queryset} \left| \query(\db) - \query(\dbstep{\updates}) \right| \leq \acc$
or else there would exist another query satisfying property 2 of Definition~\ref{def:dus}, and thus there would exist a $(\update, \db, \queryset, \acc, \updates+1)$-\dus, contradicting maximality.

$B(\alpha)$-IDC algorithms generically reduce to $(\epsilon,\delta)$-differentially private $(\alpha,\beta)$-accurate query release mechanisms in an efficiency preserving way. This framework was implicitly used by \cite{RR10} and \cite{HR10}.
\begin{theorem}[\cite{GRU11}]
\label{thm:IDCtoRelease}
If there exists a $B(\alpha)$-IDC algorithm for a class of queries $\queryset$ using a class of  datastructures $\dbstruct$  that take time at most $p(n,\alpha,|\cx|)$ to update their hypotheses, and time at most $q(n,\alpha,|\cx|)$ to evaluate a query on any $\dba \in \dbstruct$, then for any $0 < \epsilon,\delta,\beta < 1$ there exists an $(\epsilon,\delta)$-differentially private query release mechanism in the interactive setting that has update time at most $O(p(n,\alpha,\cx) + q(n,\alpha,\cx))$ and is $(\alpha,\beta)$-accurate with respect to any adaptively chosen sequence of $k$ queries from $\queryset$ where $\alpha$ is the solution to the following equality:
$$\alpha = \frac{3000\sqrt{B(\alpha)}\log(4/\delta)\log(k/\beta)}{\epsilon n}$$
\end{theorem}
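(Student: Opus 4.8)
The plan is to instantiate the standard interactive mechanism used (in special cases) by \cite{RR10,HR10}: carry the IDC data structure $\dbstep{t}$ as a running ``hypothesis,'' use a sparse-vector (``above-threshold'') subroutine to decide, cheaply in privacy, whether $\dbstep{t}$ already answers the incoming query well, and spend privacy budget only on the rounds where it does not, feeding each such query back into $\update$. Concretely, initialize $\dbstep{1} = \update(\emptyset,\cdot,\cdot)$ and a counter $t \leftarrow 1$, and fix a threshold $\tau$, a noise scale $\sigma$, and an update cap $\updates$, to be chosen. When $\querystep{i}$ arrives: compute $\querystep{i}(\dbstep{t})$ (time $q(n,\alpha,|\cx|)$); run one step of the above-threshold subroutine on $|\querystep{i}(\db)-\querystep{i}(\dbstep{t})|$, which has sensitivity $1/n$, using $\mathrm{Lap}(\Theta(\sigma))$ noise on the comparison and on the threshold offset. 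If it does not fire, output $a_i = \querystep{i}(\dbstep{t})$. If it fires, output $a_i = \noisyrealstep{t} := \querystep{i}(\db) + \mathrm{Lap}(\sigma)$, set $\dbstep{t+1} = \update(\dbstep{t},\querystep{i},\noisyrealstep{t})$ (time $p(n,\alpha,|\cx|)$), increment $t$, and restart the above-threshold subroutine; if this would make $t$ exceed $\updates$, halt and answer a fixed default thereafter. The per-query work is $O(p(n,\alpha,|\cx|) + q(n,\alpha,|\cx|))$ plus $O(1)$ Laplace samples, giving the claimed update time.

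For privacy, the crucial point is that the hard cap bounds the number of privacy-consuming operations \emph{unconditionally}, independent of the database and of the adaptively chosen query stream: there are at most $\updates$ above-threshold instances (each run until its single firing, hence $(\epsilon',0)$-differentially private for $\sigma = \Theta(1/(\epsilon' n))$ by the usual sparse-vector analysis) and at most $\updates$ Laplace-perturbed estimates (each $(\epsilon',0)$-differentially private by Theorem~\ref{thm:laplace-privacy}, since $\querystep{i}$ has sensitivity $1/n$). Thus the transcript of fire/no-fire decisions together with the noisy estimates is an adaptive composition of at most $2\updates$ mechanisms that are individually $(\epsilon',0)$-differentially private; taking $\epsilon' = \epsilon / \sqrt{16\,\updates \log(1/\delta)}$ and applying Theorem~\ref{thm:compose} makes it $(\epsilon,\delta)$-differentially private. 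Every emitted answer is either one of these noisy estimates or equals $\querystep{i}(\dbstep{t})$, and $\dbstep{t}$ is a deterministic function of the transcript so far, while the adversary's queries are deterministic functions of the answers so far; hence the whole interaction is a post-processing of an $(\epsilon,\delta)$-differentially private object, and is itself $(\epsilon,\delta)$-differentially private (the remark after Theorem~\ref{thm:compose} is what licenses the adaptive bookkeeping).

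For accuracy, I would condition on the event $\mathcal{E}$ that every Laplace variable used has magnitude at most $\gamma := \sigma\cdot\log\!\big((k+\updates)/\beta\big)$; by the tail bound $\Pr[|\mathrm{Lap}(b)|>t]=e^{-t/b}$ and a union bound over the at most $k+2\updates$ samples, $\Pr[\mathcal{E}] \ge 1-\beta$. Run the IDC at an internal accuracy $\hat\alpha$ that is a fixed constant fraction of the target $\alpha$, set the cap $\updates := B(\hat\alpha)$, and pick $\tau$ in the window (nonempty once $\gamma$ is a small enough fraction of $\alpha$) for which, on $\mathcal{E}$, a firing certifies $|\querystep{i}(\db)-\querystep{i}(\dbstep{t})| \ge \hat\alpha$ and a non-firing certifies $|\querystep{i}(\db)-\querystep{i}(\dbstep{t})| \le \alpha$; also $|\querystep{i}(\db)-\noisyrealstep{t}| \le \gamma < \hat\alpha$. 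Then on every firing round properties~2 and~3 of Definition~\ref{def:dus} hold, and properties~1 and~4 hold by construction, so the generated sequence is a genuine $(\update,\db,\queryset,\hat\alpha,\cdot)$-\dus; by Definition~\ref{def:idc} its length is at most $B(\hat\alpha)=\updates$, so the hard cap is never reached and no default is ever output. Hence every answer is $\alpha$-accurate — on non-firing rounds by the threshold guarantee, on firing rounds since $|\querystep{i}(\db)-\noisyrealstep{t}|\le\gamma\le\alpha$ — so the mechanism is $(\alpha,\beta)$-accurate. Finally, substituting $\sigma = \Theta(1/(\epsilon' n)) = \Theta\!\big(\sqrt{\updates\log(1/\delta)}/(\epsilon n)\big)$ into the requirement that $\gamma$ be a small enough constant fraction of $\alpha$ yields a fixed-point constraint of the form $\alpha = O\!\big(\sqrt{B(\alpha)}\,\log(1/\delta)\,\log(k/\beta)/(\epsilon n)\big)$: here $B(\hat\alpha)=O(B(\alpha))$ for the (polynomially bounded in $1/\alpha$) $B$ arising in this paper, $\sqrt{\log(1/\delta)}\le\log(1/\delta)$, and the union-bound, sparse-vector, and composition constants are all subsumed by the universal constant $3000$ in the stated equation.

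I expect the main obstacle to be the apparent circularity between the privacy and accuracy arguments: the privacy proof needs an a priori bound on how many rounds consume budget, but the only such bound — the IDC guarantee of Definition~\ref{def:idc} — applies only to bona fide \dus's, and a generated sequence is a bona fide \dus only on the high-probability event $\mathcal{E}$, never with certainty. The resolution is the decoupling above: impose the hard cap at $\updates=B(\hat\alpha)$ so privacy holds no matter what, then argue \emph{separately} that on $\mathcal{E}$ the cap is slack. A secondary, more mechanical difficulty is calibrating the single threshold $\tau$ and the noise scale $\sigma$ so that simultaneously (i) a firing certifies a genuinely $\hat\alpha$-distinguishing query, (ii) a non-firing certifies an $\alpha$-accurate answer, and (iii) the noisy estimates satisfy property~3 — all while keeping the resulting fixed-point equation in the stated form; this interplay is precisely what forces the constant-factor gap between $\hat\alpha$ and $\alpha$, the extra logarithmic factors, and the large universal constant.
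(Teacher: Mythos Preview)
The paper does not prove this theorem; it is quoted verbatim as a black-box result from \cite{GRU11} and then invoked once, in the proof of the main interactive theorem, without any argument given here. So there is no in-paper proof to compare against.

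That said, your sketch is the correct reconstruction of the \cite{GRU11} argument: sparse-vector to detect distinguishing queries, Laplace estimates on the (at most $B$) firing rounds, a hard cap to make the privacy composition unconditional, and then a separate high-probability accuracy argument showing the cap is never reached. Your identification of the ``apparent circularity'' and its resolution via the hard cap is exactly the point of the construction. One small caveat: you justify $B(\hat\alpha)=O(B(\alpha))$ by appealing to the specific polynomially-bounded $B$ arising later in this paper, but the theorem as stated is generic in $B$; the clean way to handle this is to set the cap at $B(\alpha/c)$ for the fixed constant $c$ and let the fixed-point equation absorb the resulting constant directly, without any growth assumption on $B$. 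Otherwise the plan is sound and matches the cited source.
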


In this section we will give an efficient IDC algorithm for the class of $m$-sparse queries, and then call on Theorem \ref{thm:IDCtoRelease} to reduce it to a differentially private query release mechanism in the interactive setting.

First we introduce the Sparse Multiplicative Weights data structure, which will be the class of datastructures $\dbstruct$ that the Sparse Multiplicative Weights IDC algorithm uses.:
\begin{definition}[Sparse Multiplicative Weights Data Structure]
The sparse multiplicative weights data structure $\dbas$ of size $s$ is composed of three parts. We write $\dbas = (\dba, h, \textrm{ind})$.
\begin{enumerate}
\item $\dba$ is a collection of $s$ real valued variables $x_1,\ldots,x_s$, with $x_i \in [0,1]$ for all $i \in [s]$. Variable $x_i$ for $i \in [s]$ is referenced by $\dba[i]$. Initially $x_i = 1/s$ for all $i \in [s]$. We define $\dba[i] = 0$ for all $i > s$.
\item $h$ is a hash table $h:\cx\rightarrow [s] \cup \emptyset$ mapping elements in the universe $X$ to indices $i \in [s]$. Elements $x \in \cx$ can also be unassigned in which case we write $h(x) = \emptyset$. Initially, $h(x) = \emptyset$ for all $x \in \cx$ We write $h^{-1}(i) = x$ if $h(x) = i$, and $h^{-1}(i) = \emptyset$ if there does not exist any $x \in \cx$ such that $h(x) = i$.
\item $\textrm{ind} \in [s+1]$ is a counter denoting the index of the first unassigned variable. For all $i < \textrm{ind}$, there exists some $x \in \cx$ such that $h(x) = i$. For all $i \geq \textrm{ind}$, there does not exist any $x \in \cx$ such that $h(x) = i$. Initially $\textrm{ind} = 1$.
\end{enumerate}

If $\textrm{ind} \leq s$, we can \emph{add} an unassigned element $x \in \cx$ to $\dbas$. Adding an element $x \in \cx$ to $\dbas$ sets $h(x) \leftarrow \textrm{ind}$ and increments $\textrm{ind} \leftarrow \textrm{ind} + 1$. If $\textrm{ind} = s+1$, attempting to add an element causes the data structure to report \textbf{FAILURE}.

A linear query $\query$ is evaluated on a sparse MW data structure $\dbas = (\dba,h)$ as follows.
$$\query(\dbas) = \sum_{x \in \cx : \query(x) > 0 \wedge h(x) \neq \emptyset}\query(x)\cdot \dba[h(x)] + \sum_{x \in \cx : \query(x) > 0 \wedge h(x) = \emptyset}\query(x)\cdot \dba[\textrm{ind}]$$
\end{definition}
We now present Algorithm \ref{alg:SMW}, the Sparse Multiplicative Weights (SMW) IDC algorithm for $m$-sparse queries. The algorithm is a version of the Hardt/Rothblum Multiplicative Weights IDC \cite{HR10}, modified to work without any dependence on the universe size. It will run multiplicative weights update steps over the variables of the SMW data structure, using the SMW data structure to delay assigning variables to particular universe elements $x \in \cx$ until necessary. Note that it is not simply running the multiplicative weights algorithm from \cite{HR10} implicitly: doing so would yield guarantees that depend on the cardinality of the universe $|\univ|$. Instead, the guarantees we will get will depend only on $m$, and so will carry over even to the infinite-universe setting.
\begin{algorithm}
\caption{The Sparse Multiplicative Weights (SMW) IDC Algorithm for $m$-sparse queries. It is instantiated with an accuracy parameter $\eta = \alpha/2$. It takes as input a sparse MW datastructure $\dbas$, an $m$-sparse query $\query \in \queryset_m$, and an estimate of the query value $\noisyreal$.}
\label{alg:SMW}
\textbf{SMW}($\dbas_t = (\dba_t,h_t, \textrm{ind}_t),\query_t,\noisyreal_t$):
\begin{algorithmic}
\IF{$\dbas_t = \emptyset$}
    \STATE \textbf{Let} $s$ be the smallest integer such that $s/(\log(s)+1) \geq 4 m/\alpha^2$.
    \STATE \textbf{Return} a new Sparse MW data structure $\dbas_1 = (D_1,h_1, \textrm{ind}_1)$ of size $s$ with $h_1(x) = \emptyset$ for all $x \in \cx$, $x_i = 1/s$ for all $i \in [s]$, and $\textrm{ind}_1 = 1$.
\ENDIF
\STATE \textbf{Let} $\dbas_{t+1} = (\dba_{t+1},h_{t+1},\textrm{ind}_{t+1}) \leftarrow \dbas_t$
\STATE \textbf{Update:} For all $x \in \cx$ such that $\query_t(x) > 0$: \textbf{If} $h_{t+1}(x)= \emptyset$ then \textbf{add} $x$ to $\dbas_{t+1}$.
\IF{$\noisyreal_t  <  \query_t(\dbas_t)$}
    \STATE \textbf{Update:} For all $x \in \cx$ such that $\query_t(x) > 0$: Let $$\dba_{t+1}[h_{t+1}(x)] \leftarrow \dba_{t+1}[h_{t+1}(x)]\cdot \exp(-\eta \query_t(x))$$
\ELSE
    \STATE \textbf{Update:} For all $x \in \cx$ such that $\query_t(x) > 0$: Let $$\dba_{t+1}[h_{t+1}(x)] \leftarrow \dba_{t+1}[h_{t+1}(x)]\cdot \exp(\eta \query_t(x))$$
\ENDIF
\STATE \textbf{Normalize:} For all $i \in [s]$: $$\dba_{t+1}[i] = \frac{\dba_{t+1}[i]}{\sum_{j=1}^{s}\dba_{t+1}[j]}$$
\STATE \textbf{Output} $\dbas_{t+1}$.
\end{algorithmic}
\end{algorithm}
\begin{theorem}
\label{thm:IDC}
The Sparse Multiplicative Weights algorithm is a $B(\alpha)$-IDC for the class of $m$-sparse queries $\queryset_m$, where:
$$B(\alpha) = 4\frac{\log s+1}{\alpha^2}$$
and $s$ is the smallest integer such that $s/(\log(s)+1) \geq 4 m/\alpha^2$.
\end{theorem}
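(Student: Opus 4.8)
The plan is to run the standard multiplicative weights (MW) relative-entropy potential argument, but carried out over the $s$ ``slots'' of the Sparse MW data structure rather than over all of $\cx$, and to separately argue that the lazy assignment of universe elements to slots never overflows (never reports \textbf{FAILURE}). Fix a database $\db$ and an $(\textbf{SMW},\db,\queryset_m,\alpha,\updates)$-\dus, and let $h$ be the hash table after the whole sequence; it is injective, and every universe element that ever appears in the support of some $\query_t$ has been assigned a slot. Define a comparison distribution $p^\ast$ on $[s]$ by $p^\ast_i=\hat{\db}[h^{-1}(i)]$ for each assigned slot $i$, with whatever mass remains placed on slots that are never assigned. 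For each step $t$ let $\tilde{\query}_t\in[0,1]^s$ be the induced slot-query, $\tilde{\query}_t[i]=\query_t(h^{-1}(i))$ when $h^{-1}(i)$ lies in the support of $\query_t$ and $\tilde{\query}_t[i]=0$ otherwise. The core bookkeeping claim is that $\query_t(\db)=\langle\tilde{\query}_t,p^\ast\rangle$ and $\query_t(\dbas_t)=\langle\tilde{\query}_t,\dba_t\rangle$, where $\dba_t$ is the vector of $s$ slot-values fed into step $t$: the first identity is immediate from $m$-sparsity and the definition of $p^\ast$, and the second rests on the invariant that at the start of every step all not-yet-assigned slots hold one common value, so the $\dba_t[\textrm{ind}_t]$ terms in the evaluation rule equal the values inherited by the slots that get assigned during that step.

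With this translation \textbf{SMW} is literally multiplicative weights over $[s]$: properties 2 and 3 of the \dus force the branch taken in step $t$ to be the error-decreasing one and give a distinguishing gap $|\langle\tilde{\query}_t,\dba_t-p^\ast\rangle|\geq\alpha$. Take the potential $\Phi_t=\mathrm{RE}(p^\ast\,\|\,\dba_t)\geq 0$; since $\dba_1$ is uniform on $[s]$, $\Phi_1\leq\log s$. The standard one-step bound (expand $\Phi_t-\Phi_{t+1}$ and control the log-partition factor using $e^{\pm x}\leq 1\pm x+x^2$ for $|x|\leq 1$ together with $q_i^2\leq q_i$ for $q_i\in[0,1]$) yields $\Phi_t-\Phi_{t+1}\geq\eta\,|\langle\tilde{\query}_t,\dba_t-p^\ast\rangle|-\eta^2\geq\eta\alpha-\eta^2=\alpha^2/4$ since $\eta=\alpha/2$. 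Telescoping and $\Phi\geq 0$ then cap the number of updates at $4\log s/\alpha^2$, hence the \dus length at $\updates\leq 4\log s/\alpha^2+1\leq B(\alpha)$ (using that nontrivial accuracy requires $\alpha\leq 1$).

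What remains, and is the one ingredient with no counterpart in the textbook argument, is ruling out \textbf{FAILURE}. Processing a query assigns at most $m$ new slots (just the new elements of its support), so after the first $t$ queries at most $mt$ slots are in use. Suppose for contradiction that processing $\query_j$ is the first to overflow; then the prefix $\query_1,\dots,\query_{j-1}$ is overflow-free, so the potential bound of the previous paragraph applies to it and gives $j-1\leq 4\log s/\alpha^2$; hence at most $4m\log s/\alpha^2\leq s-4m/\alpha^2$ slots are in use when $\query_j$ begins processing (using $s\geq 4m(\log s+1)/\alpha^2$ from the choice of $s$), and the at most $m\leq 4m/\alpha^2$ further assignments keep the count at most $s$ --- no overflow, a contradiction. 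Thus no \dus triggers \textbf{FAILURE}, the potential bound applies to the entire sequence, and $\updates\leq 4\log s/\alpha^2+1\leq B(\alpha)$, which is the theorem. I expect the main obstacle to be precisely this interlock --- bootstrapping ``\textbf{SMW} never fails'' and ``$\updates\leq B(\alpha)$'' off each other and checking that the calibration $s/(\log s+1)\geq 4m/\alpha^2$ is exactly what closes the loop (including the small integrality bookkeeping); everything else is a faithful port of the multiplicative weights potential argument of Hardt and Rothblum from $\cx$ to slot space.
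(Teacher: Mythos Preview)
Your approach is essentially the paper's: both run the multiplicative-weights relative-entropy argument over the $s$ slots, translate queries and the target via the \emph{final} hash $h_T$, and close by counting slot assignments against the calibration $s/(\log s+1)\geq 4m/\alpha^2$ to rule out \textbf{FAILURE}. The one real difference is in the target: you complete $p^\ast$ to a genuine probability distribution on $[s]$ by parking any leftover mass of $\hat{\db}$ on never-assigned slots, which gives $\Phi_t=\mathrm{RE}(p^\ast\,\|\,\dba_t)\geq 0$ for free; the paper instead keeps the partial sum $\Psi_t=\sum_{x:h_T(x)\neq\emptyset}\hat{\db}[x]\log(\hat{\db}[x]/\dba_t[h_T(x)])$, which can be negative when some of $\hat{\db}$'s mass lies on elements never seen in any query, and lower-bounds it by $-1/e$ via the log-sum inequality (this is where the ``$+1$'' in $B(\alpha)$ originates in the paper's accounting, versus your ``$+1$ step'' absorbed via $\alpha\leq 1$). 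Your completion is a touch cleaner, but make sure your bootstrap actually guarantees an unassigned slot remains at the end to absorb the leftover mass --- this follows once you note the strict inequality $m(j-1)\leq 4m\log s/\alpha^2<4m(\log s+1)/\alpha^2\leq s$ in your no-\textbf{FAILURE} induction. You also make the interlock between ``no \textbf{FAILURE}'' and ``$T\leq B(\alpha)$'' explicit, whereas the paper simply asserts $\textrm{ind}_T\leq mT\leq s$ after already having bounded $T$.
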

The analysis largely follows the Multiplicative Weights analysis given by Hardt and Rothblum \cite{HR10}. The main difference is that rather than using one global potential function, we must use a different potential function for each database update sequence, defined as a function of the state of the hash table in the last SMW datastructure in the sequence. We must also argue that we never run out of variables to assign in the SMW data structure, which would cause it to return \textbf{FAILURE}. To argue this, we apply the technique of Blum \cite{Blu90}, used to adapt Winnow to the infinite attribute model.
\begin{proof}
We will consider any maximal $(\textrm{SMW}, \dbas, \queryset, \acc, \updates)$-database update sequence $\set{(\dbas_t,
  \querystep{t}, \noisyrealstep{t})}_{t=1,\dots, \updates}$. We will argue that $\updates \leq \frac{4\log s}{\alpha^2}$ and that no data structure $\dbas_t$ in the sequence ever returns \textbf{FAILURE} when the SMW algorithm attempts to \textbf{add} some element $x \in X$ to it. Consider the real private database $\dba$ and the final data structure in the sequence $\dbas_T = (\dba_T,h_T,\textrm{ind}_T)$. We will define a non-negative potential function $\Psi$ based on $h_T$ and $\hat{\dba}$ and show that it decreases significantly at each step. We define:
  $$\Psi_t \eqdef  \sum_{x : h_T(x) \neq \emptyset} \hat{\dba}[x]\log\left(\frac{\hat{\dba}[x]}{\dba_t[h_T(x)]}\right)$$

  \begin{claim}
  \label{claim:initial}
  For all $t \in [\updates]$, $\Psi_t \geq -\frac{1}{e}$ and $\Psi_0 \leq \log s$
  \end{claim}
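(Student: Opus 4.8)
The plan is to read $\Psi_t$ as a \emph{generalized relative entropy} between two sub-probability vectors and to control it using the normalization built into the Sparse MW data structure together with the elementary fact that $\theta\log\theta \ge -1/e$ on $[0,1]$.

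For the lower bound $\Psi_t \ge -1/e$: let $S = \{x \in \cx : h_T(x) \ne \emptyset\}$ be the set of universe elements that have been assigned a variable by the end of the update sequence, and note that the map $x \mapsto h_T(x)$ is injective on $S$, since each \textbf{add} operation hands out the fresh index $\textrm{ind}$ and then increments it. For $x \in S$ put $p_x = \hat{\dba}[x]$ and $q_x = \dba_t[h_T(x)]$. Because the \textbf{Normalize} step runs at every update (and the freshly created structure has $\sum_i \dba_1[i] = s \cdot (1/s) = 1$), each $\dba_t$ is a probability vector on $[s]$; together with injectivity of $h_T$ on $S$ this gives $Q := \sum_{x \in S} q_x \le \sum_{i=1}^{s}\dba_t[i] = 1$, and similarly $P := \sum_{x \in S} p_x \le \sum_{x \in \cx}\hat{\dba}[x] = 1$ — in general this last inequality is strict, because a database element that never lies in the support of any queried $\querystep{t}$ is never assigned, and it is exactly this missing mass that forces the bound $-1/e$ rather than $0$. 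Every $q_x$ is strictly positive (each $\dba_t[i]$ starts at $1/s$ and is only ever multiplied by factors $\exp(\pm\eta\,\querystep{t}(x)) > 0$ and then rescaled), and terms with $p_x = 0$ contribute $0$ under the convention $0\log 0 = 0$, so $\Psi_t = \sum_{x \in S} p_x\log(p_x/q_x)$ is well defined. The log-sum inequality (i.e.\ convexity of $\theta\mapsto\theta\log\theta$) gives $\Psi_t \ge P\log(P/Q)$, and since $Q \le 1$ we get $P\log(P/Q) \ge P\log P \ge \min_{0 \le \theta \le 1}\theta\log\theta = -1/e$.

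For the bound $\Psi_0 \le \log s$: in the freshly created data structure (the state at which $\Psi_0$ is evaluated) every variable equals $1/s$, so
$$\Psi_0 = \sum_{x \in S}\hat{\dba}[x]\log\big(s\,\hat{\dba}[x]\big) = (\log s)\sum_{x \in S}\hat{\dba}[x] + \sum_{x \in S}\hat{\dba}[x]\log\hat{\dba}[x].$$
The first summand is at most $\log s$ since $\sum_{x \in S}\hat{\dba}[x] \le 1$ and $\log s \ge 0$; the second summand is nonpositive since $\hat{\dba}[x] \le 1$ for every $x$; adding them gives $\Psi_0 \le \log s$. This is a warm-up claim, so I do not expect a genuine obstacle; the only point needing care is the bookkeeping around the \emph{final} hash table $h_T$ — that $x \mapsto h_T(x)$ is injective on $S$, that $\{\dba_t[h_T(x)]\}_{x \in S}$ are therefore distinct coordinates of a probability vector, and that the relevant vectors are sub-distributions rather than distributions. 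The substantive content of Theorem~\ref{thm:IDC} — that $\Psi$ decreases by $\Omega(\alpha^2)$ at each step and that the data structure never reports \textbf{FAILURE} — lies beyond this claim.
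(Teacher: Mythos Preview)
Your proof is correct and follows essentially the same route as the paper: the log-sum inequality reduces the lower bound to $P\log(P/Q)$ for sub-probabilities $P,Q\le 1$, after which $P\log(P/Q)\ge P\log P\ge -1/e$; the upper bound is the same direct computation using $\dba_0[i]=1/s$. Your version is slightly more explicit about the injectivity of $h_T$ on $S$ (which is what guarantees $Q\le 1$) and about the positivity of each $q_x$, but the argument is the same.
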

  \begin{proof}
The log-sum inequality states that for any collection of non-negative numbers $a_1,\ldots,a_n$ and $b_1,\ldots,b_n$: $$\sum_{i=1}^na_i\log\left(\frac{a_i}{b_i}\right) \geq a\log\left(\frac{a}{b}\right)$$
 where $a = \sum_{i=1}^na_i$ and $b = \sum_{i=1}^n b_i$. We therefore have:
 \begin{eqnarray*}
 \Psi_t &=&  \sum_{x : h_T(x) \neq \emptyset} \hat{\dba}[x]\log\left(\frac{\hat{\dba}[x]}{\dba_t[h_T(x)]}\right)\\
  &\geq&  \left(\sum_{x : h_T(x) \neq \emptyset} \hat{\dba}[x]\right) \log\left(\frac{\sum_{x : h_T(x) \neq \emptyset} \hat{\dba}[x]}{\sum_{x : h_T(x) \neq \emptyset} \dba_t[h_T(x)]}\right) \\
  &\geq&\left(\sum_{x : h_T(x) \neq \emptyset} \hat{\dba}[x]\right) \log\left(\sum_{x : h_T(x) \neq \emptyset} \hat{\dba}[x]\right) \\
  &\geq& -\frac{1}{e}
 \end{eqnarray*}
 where the first inequality follows from the log-sum inequality, the second follows from the fact that $\sum_{x : h_T(x) \neq \emptyset} \dba_t[h_T(x)] \leq 1$, and the third follows from the fact that $\min_{a \in [0,1]}a\log a = -\frac{1}{e}$.
 To see that $\Psi_0 \leq \log s$, recall that $\dba_0[i] = 1/s$ for all $i$. Therefore:
 $$\Psi_0 =  \sum_{x : h_T(x) \neq \emptyset} \hat{\dba}[x]\log\left(s\hat{\dba}[x]\right)$$
 Since $\hat{\dba}$ is a probability distribution, this expression takes maximum value $\log s$.
  \end{proof}

  We will argue that in every step the potential drops by at least $\alpha^2/4$. Because the potential begins at $\log s$, and must always be non-negative, we therefore know that there can be at most $\updates \leq 4\log s/\alpha^2$ steps. To begin, let us see exactly how much the potential drops at each step:
\begin{lemma}
\label{lem:MWPotentialDrop}
$$\Psi_{t}-\Psi_{t+1} \geq \alpha^2/4$$
\end{lemma}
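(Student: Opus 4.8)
The plan is to follow the standard multiplicative weights potential-drop calculation (as in Hardt--Rothblum \cite{HR10}), but carried out against the fixed final hash table $h_T$ rather than a global universe. First I would write out the difference $\Psi_t - \Psi_{t+1}$ using the definition of $\Psi$. Since every $x$ with $\querystep{t}(x)>0$ has been added to the data structure by step $t$ (the \textbf{Update} step of Algorithm~\ref{alg:SMW} forces this), and since $h_T$ extends $h_{t+1}$, the assignment $h_T(x)$ agrees with $h_{t+1}(x)$ for all such $x$; so the multiplicative update to $\dba_{t+1}[h_{t+1}(x)]$ is exactly an update to $\dba_{t+1}[h_T(x)]$. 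Writing $Z_{t+1} = \sum_{j=1}^s \dba_{t+1}[j]$ for the normalization constant (before normalizing), the telescoping gives
\[
\Psi_t - \Psi_{t+1} = \sum_{x : h_T(x)\neq\emptyset} \hat{\dba}[x]\log\!\left(\frac{\dba_{t+1}[h_T(x)]}{\dba_t[h_T(x)]}\right)
= \eta\,\sigma_t \sum_{x} \hat{\dba}[x]\,\querystep{t}(x) \;-\; \log Z_{t+1},
\]
where $\sigma_t \in \{-1,+1\}$ is the sign chosen in the \textbf{IF} branch (i.e. $\sigma_t = +1$ when $\noisyrealstep{t} < \querystep{t}(\dbas_t)$, and $-1$ otherwise), and the $x$ with $\querystep{t}(x)=0$ contribute nothing.

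Next I would bound $\log Z_{t+1}$. Using $e^{z} \le 1 + z + z^2$ for $|z|\le 1$ (valid since $\eta \le 1$ and $\querystep{t}(x)\in[0,1]$), together with $\log(1+z)\le z$, I get $\log Z_{t+1} \le -\eta\,\sigma_t \sum_x \dba_t[h_{t+1}(x)]\,\querystep{t}(x) + \eta^2$. Here I must be slightly careful: the sum $\sum_x \dba_t[h_{t+1}(x)]\querystep{t}(x)$ is \emph{not} exactly $\querystep{t}(\dbas_t)$, because in $\dbas_t$ some of the relevant $x$ were still unassigned and contributed via $\dba_t[\textrm{ind}_t]$. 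However, when an element $x$ is added during step $t$ it receives index $\textrm{ind}$ whose current weight equals the weight that $\query$ evaluation was already using for $x$ in $\dbas_t$; so in fact $\sum_x \dba_t[h_{t+1}(x)]\querystep{t}(x) = \querystep{t}(\dbas_t)$ exactly. Substituting, and noting $\sum_x \hat{\dba}[x]\querystep{t}(x) = \querystep{t}(\dba)$ restricted to the assigned coordinates — which, since all of $\querystep{t}$'s support is assigned, equals $\querystep{t}(\dba)$ — I obtain
\[
\Psi_t - \Psi_{t+1} \;\ge\; \eta\,\sigma_t\big(\querystep{t}(\dba) - \querystep{t}(\dbas_t)\big) - \eta^2.
\]

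Finally I would close the argument using the \dus\ properties. By property~2 of Definition~\ref{def:dus}, $|\querystep{t}(\dba) - \querystep{t}(\dbas_t)| \ge \alpha$, and by property~3 together with the branch condition, the sign $\sigma_t$ is chosen to match the sign of $\querystep{t}(\dba) - \querystep{t}(\dbas_t)$ (this is exactly the reason the algorithm compares $\noisyrealstep{t}$ to $\querystep{t}(\dbas_t)$: since $\noisyrealstep{t}$ is within $\alpha$ of $\querystep{t}(\dba)$ and the gap is at least $\alpha$, $\noisyrealstep{t}$ lies on the correct side of $\querystep{t}(\dbas_t)$). Hence $\sigma_t(\querystep{t}(\dba) - \querystep{t}(\dbas_t)) \ge \alpha$, giving $\Psi_t - \Psi_{t+1} \ge \eta\alpha - \eta^2$. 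Plugging in $\eta = \alpha/2$ yields $\Psi_t - \Psi_{t+1} \ge \alpha^2/2 - \alpha^2/4 = \alpha^2/4$, as claimed.

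I expect the main obstacle to be the bookkeeping around the hash table — specifically, justifying cleanly that evaluating $\querystep{t}$ on $\dbas_t$ (where some support elements sit at the shared index $\textrm{ind}_t$) produces the same value as the ``idealized'' inner product $\sum_x \dba_t[h_{t+1}(x)]\querystep{t}(x)$ after the add-step, and that replacing $h_{t+1}$ by $h_T$ throughout is harmless. Once that identification is made, the remainder is the textbook MW estimate; the sign-correctness of $\sigma_t$ is the only other place where the \dus\ hypotheses are genuinely used.
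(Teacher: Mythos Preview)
Your approach is correct and essentially identical to the paper's: both carry out the standard Hardt--Rothblum multiplicative-weights potential calculation, and your explicit bookkeeping about $h_T$ extending $h_{t+1}$ and about newly-added elements inheriting the common ``unassigned'' weight $\dba_t[\textrm{ind}_t]$ is exactly the identification the paper uses implicitly. One small slip to fix: your sign convention for $\sigma_t$ is inverted --- when $\noisyrealstep{t} < \querystep{t}(\dbas_t)$ the algorithm multiplies by $\exp(-\eta\querystep{t}(x))$ and (by the \dus\ properties) $\querystep{t}(\dba)-\querystep{t}(\dbas_t)<0$, so for $\sigma_t$ to ``match the sign of $\querystep{t}(\dba)-\querystep{t}(\dbas_t)$'' you need $\sigma_t=-1$ there; correspondingly the bound on $\log Z_{t+1}$ should be $\le \eta\sigma_t\,\querystep{t}(\dbas_t)+\eta^2$ so that the two displayed lines combine to the claimed $\eta\sigma_t(\querystep{t}(\dba)-\querystep{t}(\dbas_t))-\eta^2$.
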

\begin{proof}
We follow the analysis of \cite{HR10}. We consider the case in which $\noisyreal_t  <  \query_t(\dbas_t)$. In this case:
\begin{eqnarray*}
\Psi_{t}-\Psi_{t+1} &=&  \sum_{x : h_T(x) \neq \emptyset} \hat{\dba}[x]\log\left(\frac{\hat{\dba}[x]}{\dba_t[h_T(x)]}\right) -  \sum_{x : h_T(x) \neq \emptyset} \hat{\dba}[x]\log\left(\frac{\hat{\dba}[x]}{\dba_{t+1}[h_T(x)]}\right) \\
&=& \sum_{x : h_T(x) \neq \emptyset} \hat{\dba}[x]\log\left(\frac{\dba_{t+1}[h_T(x)]}{\dba_{t}[h_T(x)]}\right) \\
&\geq& \sum_{x : h_T(x) \neq \emptyset} \hat{\dba}[x]\log\left(\frac{\exp(-\eta \query_t(x))\cdot \dba_{t}[h_T(x)]}{\dba_{t}[h_T(x)]}\right) - \log \left(\sum_{j =1}^s \exp(-\eta \query_t(h_t^{-1}(j)))\dba_{t}[j]\right) \\
&=& \sum_{x : \query_t(x) > 0} -\hat{\dba}[x] \eta \query_t(x) - \log \left(\sum_{j =1}^s \exp(-\eta \query_t(h_t^{-1}(j)))\dba_{t}[j]\right) \\
&=& -\eta \query_t(\dba) - \log \left(\sum_{j =1}^s \exp(-\eta \query_t(h_t^{-1}(j)))\dba_{t}[j]\right) \\
&\geq& -\eta \query_t(\dba) - \log \left(\sum_{j =1}^s (1-\eta\query_t(h_t^{-1}(j))+\eta^2)\dba_{t}[j]\right) \\
&=&-\eta \query_t(\dba) - \log \left(1+\eta^2 - \eta\sum_{x : \query_t(x) > 0}\query_t(x)\dba_{t}[h_t(x)]\right) \\
&\geq& \eta(\query_t(\dbas_t)-\query_t(\dba)) - \eta^2 \\
&\geq&\alpha^2/2 - \alpha^2/4 \\
&=& \alpha^2/4
\end{eqnarray*}
In this calculation, we used the facts that:
$$\exp(-\eta \query_t(x_i)) \leq 1 - \eta \query_t(x_i) + \eta^2\query_t(x_i)^2 \leq 1 - \eta \query_t(x_i) + \eta^2$$
that $\sum_{j=1}^s \dba_t[j] = 1$, that $\log(1+y) \leq y$ for $y > -1$, that by the definition of a database update sequence, when $\noisyreal_t  <  \query_t(\dbas_t)$ we also have that $\query_t(\dba) < \query_t(\dbas_t)$, and that by the definition of database update sequence we always have $|\query_t(\dbas_t)-\query_t(\dba)| \geq \alpha$. Finally we recall that $\eta = \alpha/2$ The case when $\noisyreal_t  >  \query_t(\dbas_t)$ is exactly similar.
\end{proof}
Theorem \ref{thm:IDC} then immediately follows by combining Claim \ref{claim:initial} with Lemma \ref{lem:MWPotentialDrop}:
$$-\frac{1}{e} \leq \Psi_T \leq \log s - \updates \cdot \frac{\alpha^2}{4}$$
Solving for $\updates$ we find:
$$\updates \leq 4 \frac{\log s+1/e}{\alpha^2} <  4 \frac{\log s+1}{\alpha^2}$$
Finally to see that the SMW data structure never reports \textbf{FAILURE}, it suffices to observe that $\textrm{ind}_T \leq s$. Because each query $\query_t$ is assumed to be $m$-sparse, at most $m$ variables can be \textbf{add}ed to the SMW data structure at each update. Therefore, we have
$$\textrm{ind}_T \leq m \cdot T \leq \frac{4m(\log s+1)}{\alpha^2}\leq s$$
The last inequality follows from recalling that we chose $s$ such that $s/(\log s+1) \geq 4m/\alpha^2$. This completes the proof.
\end{proof}

Finally, we may observe that both the update time for the SMW IDC and the time to evaluate a query on the SMW datatructure is $O(s) = \tilde{O}(m/\alpha^2)$. Therefore, we may instantiate Theorem \ref{thm:IDCtoRelease} with the SMW IDC algorithm to obtain the main result of this section:
\begin{theorem}
For any $0 < \epsilon,\delta,\beta < 1$ There exists an $(\epsilon,\delta)$-differentially private query release mechanism in the interactive setting, with running time per query $\tilde{O}(m/\alpha^2)$ that is $(\alpha,\beta)$-accurate with respect to the set of all $m$-sparse linear queries $\queryset_m$, with:
$$\alpha = O\left(\frac{\left(\log m\right)^{1/4}\left(\log\frac{4}{\delta}\log\frac{k}{\beta}\right)^{1/2}}{\left(\epsilon\cdot n\right)^{1/2}}\right)$$
\end{theorem}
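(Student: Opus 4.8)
The statement is a direct instantiation of the generic reduction, so the proof is essentially a calculation combining Theorem~\ref{thm:IDC} with Theorem~\ref{thm:IDCtoRelease}. The plan is: (i) read off the per-step running times of the SMW IDC and check that they are independent of $|\cx|$; (ii) plug the IDC from Theorem~\ref{thm:IDC} into Theorem~\ref{thm:IDCtoRelease}; and (iii) solve the resulting equation for $\alpha$.

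For step (i): the SMW datastructure has size $s = \tilde O(m/\alpha^2)$. An update processes only the $\leq m$ universe elements in the support of the incoming query (each causing an $O(1)$ hash insertion and one weight multiplication) and then renormalizes the $s$ weights, so its cost is $p(n,\alpha,|\cx|) = O(s)$; evaluating an $m$-sparse query on the datastructure enumerates its $\leq m$ supported elements, performs one hash lookup each, and sums, so $q(n,\alpha,|\cx|) = O(s)$ as well. Both are $O(s) = \tilde O(m/\alpha^2)$ and in particular do not depend on $|\cx|$, since the datastructure is never iterated over $\cx$, only over query supports and over $[s]$.

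For steps (ii)--(iii): Theorem~\ref{thm:IDCtoRelease} applied with $B(\alpha) = 4(\log s + 1)/\alpha^2$ yields, for every $0<\epsilon,\delta,\beta<1$, an $(\epsilon,\delta)$-differentially private $(\alpha,\beta)$-accurate interactive mechanism for $\queryset_m$ with per-query time $O(p+q) = \tilde O(m/\alpha^2)$, where $\alpha$ solves
$$\alpha \;=\; \frac{3000\sqrt{B(\alpha)}\,\log(4/\delta)\log(k/\beta)}{\epsilon n} \;=\; \frac{6000\sqrt{\log s + 1}\,\log(4/\delta)\log(k/\beta)}{\alpha\,\epsilon n}.$$
The one delicate point is that this equation is implicit: $s$, and hence $B(\alpha)$, depends on $\alpha$. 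I would resolve this by noting that $s = \tilde O(m/\alpha^2)$ gives $\log s + 1 = O(\log m + \log(1/\alpha))$, and that any solution satisfies $\alpha = \tilde\Theta((\epsilon n)^{-1/2})$ (up to the $\log m$, $\log(1/\delta)$, $\log(k/\beta)$ factors), so $\log(1/\alpha) = O(\log(\epsilon n))$ is lower order and $\log s + 1 = O(\log m)$ in the efficient regime where $m = \poly(n)$. Substituting $\sqrt{\log s+1} = O(\sqrt{\log m})$, squaring, and rearranging gives $\alpha^2 = O\!\big(\sqrt{\log m}\,\log(4/\delta)\log(k/\beta)/(\epsilon n)\big)$, i.e.
$$\alpha \;=\; O\!\left(\frac{(\log m)^{1/4}\,\big(\log(4/\delta)\log(k/\beta)\big)^{1/2}}{(\epsilon n)^{1/2}}\right),$$
with update time $\tilde O(m/\alpha^2)$, which is the claimed bound. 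The main obstacle is thus purely bookkeeping — confirming the $|\cx|$-independence of the $O(s)$ update and evaluation costs, and taming the self-reference in the equation defining $\alpha$ — with no new conceptual content beyond Theorems~\ref{thm:IDC} and~\ref{thm:IDCtoRelease}.
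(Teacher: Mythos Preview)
Your proposal is correct and follows essentially the same approach as the paper: instantiate Theorem~\ref{thm:IDCtoRelease} with the SMW IDC, use $B(\alpha)=4(\log s+1)/\alpha^2$ from Theorem~\ref{thm:IDC}, observe that update and evaluation time are $O(s)=\tilde O(m/\alpha^2)$, and solve for $\alpha$. You are in fact more careful than the paper's own proof, which simply states that the result follows from the instantiation without explicitly addressing the self-reference in the defining equation for $\alpha$ (your handling of $\log s = O(\log m)$ via the $m=\poly(n)$ regime is the right way to close that loop).
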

\begin{proof}
The proof follows by instantiating Theorem \ref{thm:IDCtoRelease} with the SMW IDC algorithm, together with the bound $B(\alpha) = \frac{4(\log s+1)}{\alpha^2}$ proven in Theorem \ref{thm:IDC}, and recalling that $s$ is the smallest integer such that $s/(\log s+1) \geq 4m/\alpha^2$.
\end{proof}

\subsection{Applications to Conjunctions}
In this section, we briefly mention a simple application of this algorithm to the problem of releasing conjunctions with many literals. The algorithm given in this section leads to new results for releasing conjunctions on $d-k$ out of $d$ literals. This complements the recent results of Hardt, Rothblum, and Servedio \cite{HRS11} for releasing conjunctions on $k$ out of $d$ literals. The class of conjunctions are defined over the universe $\cx = \{0,1\}^d$ equal to the $d$-dimensional boolean hypercube.
\begin{definition}
A conjunction is a linear query specified by a subset of variables $S \subseteq [d]$, and defined by the predicate $Q_S:\{0,1\}^d\rightarrow \{0,1\}$ where $Q_S(x) = \prod_{i \in S} x_i$. We say that a conjunction $Q_S$ has $t$ literals if $|S| = t$.
\end{definition}
\begin{remark}
The set of all conjunctions of $d - k$ literals, denoted $C_{d-k}$ is $2^k$ sparse, and of size $|C| \leq d^k$.
\end{remark}
We can release the answers to all queries in $C_{d-k}$ by running the sparse multiplicative weights algorithm on each query. We therefore get the following corollary:
\begin{corollary}
There exists an $(\epsilon,\delta)$-differentially private algorithm in the non-interactive release setting with running time at most
$$\tilde{O}\left(|C_{d-k}|\cdot \frac{2^k}{\alpha^2} \right)= \tilde{O}\left( \frac{(2d)^k}{\alpha^2}\right)$$
that is $(\alpha,\beta)$-accurate for the set of all conjunctions on $d-k$ literals, which requires a database of size only:
$$n \geq \frac{k^{1.5}\log \frac{1}{\delta}\log \frac{d}{\beta}}{\epsilon\alpha^2}$$
\end{corollary}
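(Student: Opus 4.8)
The plan is to simply combine the earlier SMW interactive mechanism with the two facts stated in the preceding Remark: the class $C_{d-k}$ of conjunctions on $d-k$ literals is $2^k$-sparse, and has cardinality $|C_{d-k}| \leq d^k$. First I would observe that although the SMW mechanism of the previous subsection is stated in the interactive setting, we may obtain a non-interactive mechanism for the \emph{finite} query class $C_{d-k}$ by the standard trick of running the interactive mechanism on all $k' = |C_{d-k}|$ queries of the class in some fixed order and publishing the resulting list of answers as the output data structure; since the query class is known and finite, this requires no interaction and the $\mathrm{Eval}$ function just looks up the stored answer. The accuracy and privacy guarantees carry over directly from Theorem~\ref{thm:IDCtoRelease} (instantiated with the SMW IDC and its bound $B(\alpha) = 4(\log s + 1)/\alpha^2$ from Theorem~\ref{thm:IDC}), with the number of queries $k$ in that theorem set to $|C_{d-k}| \leq d^k$.

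Next I would read off the running time. Each of the $|C_{d-k}|$ queries is answered in time $\tilde O(m/\alpha^2)$ per query with $m = 2^k$, so the total running time is $\tilde O\!\left(|C_{d-k}| \cdot 2^k/\alpha^2\right) = \tilde O\!\left((2d)^k/\alpha^2\right)$, using $|C_{d-k}| \leq d^k$ and $d^k \cdot 2^k = (2d)^k$. This gives the stated running time bound.

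For the database-size requirement, I would start from the accuracy equation of Theorem~\ref{thm:IDCtoRelease}, $\alpha = 3000\sqrt{B(\alpha)}\log(4/\delta)\log(k/\beta)/(\epsilon n)$ with $k = |C_{d-k}| \leq d^k$, so $\log(k/\beta) = O(k \log(d/\beta))$ (absorbing the $\log(1/\beta)$ into the $\tilde O$, or keeping it explicit as in the statement). Substituting $B(\alpha) = 4(\log s + 1)/\alpha^2$ where $s$ satisfies $s/(\log s + 1) \geq 4m/\alpha^2 = 2^{k+2}/\alpha^2$, one checks that $\log s = \tilde O(k)$, hence $\sqrt{B(\alpha)} = \tilde O(\sqrt{k}/\alpha)$. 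Plugging in and solving for $n$ yields $n = \tilde O\!\left(\sqrt{k}\,\log(1/\delta)\log(d/\beta)\,k / (\epsilon \alpha^2)\right) = \tilde O\!\left(k^{1.5}\log(1/\delta)\log(d/\beta)/(\epsilon\alpha^2)\right)$, which is the claimed bound; the $\alpha$ on the right (rather than $\alpha^2$) from $\sqrt{B(\alpha)}$ combines with the explicit $1/\alpha$ from rearranging $\alpha = \cdots/(\epsilon n)$ to give the $\alpha^2$ in the denominator.

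The only slightly delicate point — the place I would be most careful — is the bookkeeping of the $\log s$ term: one must verify that $s$, defined as the smallest integer with $s/(\log s + 1) \geq 4\cdot 2^k/\alpha^2$, indeed has $\log s = O(k + \log(1/\alpha))$, so that it is hidden inside the $\tilde O$ and does not contribute an extra factor polynomial in $k$. This follows because if $s \approx c\cdot 2^k(\log s+1)/\alpha^2$ then $\log s = O(k + \log\log s + \log(1/\alpha))$, which bootstraps to $\log s = O(k + \log(1/\alpha))$. Everything else is a direct substitution into Theorems~\ref{thm:IDCtoRelease} and~\ref{thm:IDC} together with the cardinality and sparsity bounds from the Remark.
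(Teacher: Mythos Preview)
Your proposal is correct and follows exactly the paper's approach: the paper's entire proof is the single sentence ``We can release the answers to all queries in $C_{d-k}$ by running the sparse multiplicative weights algorithm on each query,'' and you have simply carried out the substitutions (sparsity $m=2^k$, number of queries $|C_{d-k}|\le d^k$) into the SMW accuracy and per-query running-time bounds in more detail than the paper bothers to. Your bookkeeping on $\log s = O(k+\log(1/\alpha))$ is fine and is indeed the only step that requires any care.
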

We note that the running time of this algorithm is comparable to the running time of the algorithm of \cite{HRS11} for releasing all conjunctions of $k$ out of $d$ literals to worst case error (time roughly $\tilde{O}(|C_k|) = \tilde{O}(d^k)$), but requires a database of size only roughly $k^{1.5}\log d$, rather than $d^{\tilde{O}(\sqrt{k})}$ as required by \cite{HRS11}. Of course, conjunctions on $k$ literals are a more natural class than conjunctions on $d-k$ literals, but the results are complimentary.

Moreover, applying the sparse multiplicative weights algorithm in the interactive setting gives polynomially bounded running time per query for conjunctions on $d-k$ literals for any $k = O(\log n)$. Note that this is still a super-polynomially sized class of conjunctions, with $|C_{O(\log n)}| = d^{O(\log n)}$. This is the first interactive query release algorithm that we are aware of that is simultaneously privacy-efficient and computationally-efficient for a super-polynomially sized class of conjunctions (or any other family of queries with super-constant VC-dimension).   
\section{A Non-Interactive Mechanism via Random Projection}
\label{sec:noninteractive}
In this section, we give a non-interactive query release mechanism for sparse queries based on releasing a perturbed random projection of the private database, together with the projection matrix. Note that when viewing the database $\dba$ as a vector, it is an $|\univ|$-dimensional object: $\dba \in \mathbb{R}^{|\univ|}$. A linear projection of $\dba$ into $T$ dimensions is obtained by multiplying it by a $|\univ|\times T$ matrix, which cannot even be represented explicitly if we require algorithms that run in time polynomial in $n = |\dba|$ for $n << |\univ|$. It is therefore essential that we use projection matrices which can be represented concisely using hash functions drawn from limited-independence families.

We will use a limited-independence version of the Johnson-Lindenstrauss lemma presented in \cite{KN11}, first proven by \cite{Ach01, CW09}.
\begin{theorem}[The Johnson-Lindenstrauss Lemma with Limited Independence \cite{Ach01,CW09,KN11}]
\label{thm:JL}
For $d > 0$ an integer and any $0 < \varsigma, \tau < 1/2$, let $A$ be a $T\times d$ random matrix with $\pm 1/\sqrt{T}$ entries that are $r$-wise independent for $T \geq 4\cdot 64^2\varsigma^{-2}\log(1/\tau)$ and $r \geq 2\log(1/\tau)$. Then for any $x \in \mathbb{R}^d$:
$$\Pr_A[|||Ax||_2^2 - ||x||_2^2| \geq \varsigma ||x||_2^2] \leq \tau$$
\end{theorem}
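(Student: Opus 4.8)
The plan is to prove this by the moment method. Write $Z \eqdef \|Ax\|_2^2 - \|x\|_2^2$; by homogeneity we may assume $\|x\|_2 = 1$, so the goal becomes $\Pr_A[\,|Z| \ge \varsigma\,] \le \tau$. First I would set up $Z$ explicitly. Writing the entries as $A_{ij} = \sigma_{ij}/\sqrt{T}$ with the $\sigma_{ij} \in \{-1,1\}$ being $r$-wise independent, we have $(Ax)_i = \tfrac{1}{\sqrt T}\sum_j \sigma_{ij}x_j$, and expanding the square,
$$\|Ax\|_2^2 \;=\; \frac{1}{T}\sum_{i=1}^T\Big(\|x\|_2^2 + \sum_{j\ne k}\sigma_{ij}\sigma_{ik}x_jx_k\Big),$$
so that (using $r\ge 2$, hence pairwise independence, to kill the cross terms in expectation) $\Ex_A\|Ax\|_2^2 = 1$ and $Z = \tfrac{2}{T}\sum_{i=1}^T\sum_{j<k}\sigma_{ij}\sigma_{ik}x_jx_k$, a centered homogeneous degree-two Rademacher chaos.

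Next I would fix an even integer $\ell$ with $2\ell \le r$ and $\ell = \Theta(\log(1/\tau))$ — the hypothesis $r \ge 2\log(1/\tau)$ is exactly what makes such an $\ell$ available — bound $\Ex[Z^\ell]$, and then apply Markov's inequality, $\Pr[|Z|\ge\varsigma] \le \Ex[Z^\ell]/\varsigma^\ell$. The key simplifying observation is that, after expanding $Z^\ell$, each resulting monomial is a product of exactly $2\ell$ of the variables $\sigma_{ij}$ and hence mentions at most $2\ell \le r$ distinct variables; therefore $r$-wise independence makes $\Ex[Z^\ell]$ identical to its value in the model where all $Td$ entries are fully independent. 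So it suffices to bound $\Ex[Z^\ell]$ in the i.i.d.\ Rademacher model. There a standard computation yields $\Ex[Z^\ell] \le (C\ell/T)^{\ell/2}$ for an absolute constant $C$: only monomials in which every $\sigma_{ij}$ occurs to an even power have nonzero expectation, and one bounds the number and total weight of these surviving monomials by suitable powers of $\ell$ and $T$, using $\sum_j x_j^2 = 1$ and $\sum_{j<k}x_j^2x_k^2 \le \tfrac12$; this is the computation carried out in \cite{CW09,KN11}. (An alternative route is to note that within row $i$ the chaos $Q_i \eqdef \sum_{j<k}\sigma_{ij}\sigma_{ik}x_jx_k$ has $\Ex Q_i = 0$, $\Ex Q_i^2 \le \tfrac12$ and, by hypercontractivity for degree-two chaos, controlled higher even moments, and then to bound the moments of the sum $Z = \tfrac{2}{T}\sum_i Q_i$ of $T$ i.i.d.\ centered variables; this too needs only $O(\log(1/\tau))$-wise independence among the entries.)

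Finally I would plug in the parameters: with $T \ge 4\cdot 64^2\,\varsigma^{-2}\log(1/\tau)$ and $\ell = \Theta(\log(1/\tau))$, the bound $\Pr[|Z|\ge\varsigma] \le (C\ell/(T\varsigma^2))^{\ell/2}$ has base at most $e^{-2}$ — the large constant $64^2$ in the statement is chosen precisely so this holds once the combinatorial constant $C$ is pinned down — giving $\Pr[|Z|\ge\varsigma] \le e^{-\ell} \le \tau$. The main obstacle is the combinatorial moment estimate $\Ex[Z^\ell] \le (C\ell/T)^{\ell/2}$ with an explicit constant: classifying which monomials of $Z^\ell$ survive in expectation and bounding the aggregate contribution of each type is where essentially all of the work (and the numerical constant) lives; the reduction to full independence via the arity argument and the final Markov step are routine bookkeeping.
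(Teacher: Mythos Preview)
The paper does not prove this theorem: it is stated as a black-box result imported from \cite{Ach01,CW09,KN11}, with no proof given. So there is no ``paper's own proof'' to compare against; the paper simply cites the result and moves on to use it (in Corollary~\ref{cor:JL} and in the analysis of \textsf{SparseProject}).

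That said, your proposal is a faithful sketch of the argument in the cited references. The key structural point --- that each monomial in the expansion of $Z^\ell$ touches at most $2\ell \le r$ distinct $\sigma_{ij}$'s, so $r$-wise independence suffices to make $\Ex[Z^\ell]$ coincide with the fully independent case --- is exactly the reduction used in \cite{CW09,KN11}, and the subsequent moment bound $\Ex[Z^\ell] \le (C\ell/T)^{\ell/2}$ followed by Markov is the standard completion. Your caveat that the combinatorial moment estimate is where the real work and the explicit constant live is accurate; the rest is bookkeeping, as you say.
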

We will use the fact that random projections also preserve pairwise inner products. The following corollary is well known:
\begin{corollary}
\label{cor:JL}
For $d > 0$ an integer and any $0 < \varsigma, \tau < 1/2$, let $A$ be a $T\times d$ random matrix with $\pm 1/\sqrt{T}$ entries that are $r$-wise independent for $T \geq 4\cdot 64^2\varsigma^{-2}\log(1/\tau)$ and $r \geq 2\log(1/\tau)$. Then for any $x, y \in \mathbb{R}^d$:
$$\Pr_A[|\langle(Ax),(Ay)\rangle - \langle x, y\rangle | \geq \frac{\varsigma}{2} (||x||_2^2+||y||_2^2)] \leq 2\tau$$
\end{corollary}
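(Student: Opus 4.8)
The plan is to reduce the inner-product statement to two applications of Theorem~\ref{thm:JL} via the polarization identity. First I would record two elementary facts valid in any inner product space: the polarization identity $\langle u, v\rangle = \tfrac14\left(\|u+v\|_2^2 - \|u-v\|_2^2\right)$, and the parallelogram law $\|u+v\|_2^2 + \|u-v\|_2^2 = 2\|u\|_2^2 + 2\|v\|_2^2$. Applying polarization with $u = Ax$, $v = Ay$ and using linearity of $A$ (so $Ax \pm Ay = A(x\pm y)$) gives $\langle Ax, Ay\rangle = \tfrac14\left(\|A(x+y)\|_2^2 - \|A(x-y)\|_2^2\right)$, while the same identity with $u=x$, $v=y$ gives $\langle x,y\rangle = \tfrac14\left(\|x+y\|_2^2 - \|x-y\|_2^2\right)$.

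Next I would invoke Theorem~\ref{thm:JL} on the two fixed vectors $x+y$ and $x-y$. Since these vectors are determined before $A$ is drawn, the theorem applies verbatim with the same parameters $T$ and $r$; no change to the $r$-wise independence structure of $A$ is required. A union bound then shows that, except with probability at most $2\tau$, we simultaneously have $\left|\,\|A(x+y)\|_2^2 - \|x+y\|_2^2\,\right| \le \varsigma\|x+y\|_2^2$ and $\left|\,\|A(x-y)\|_2^2 - \|x-y\|_2^2\,\right| \le \varsigma\|x-y\|_2^2$.

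Conditioning on this event, I would write $\langle Ax, Ay\rangle - \langle x,y\rangle = \tfrac14\left[\left(\|A(x+y)\|_2^2 - \|x+y\|_2^2\right) - \left(\|A(x-y)\|_2^2 - \|x-y\|_2^2\right)\right]$, and apply the triangle inequality together with the two bounds above to get $\left|\langle Ax, Ay\rangle - \langle x,y\rangle\right| \le \tfrac{\varsigma}{4}\left(\|x+y\|_2^2 + \|x-y\|_2^2\right)$. The parallelogram law then rewrites the right-hand side as $\tfrac{\varsigma}{2}\left(\|x\|_2^2 + \|y\|_2^2\right)$, which is exactly the claimed bound, holding with probability at least $1 - 2\tau$.

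There is no substantive obstacle here; this is a standard polarization argument. The only points requiring a moment's care are that Theorem~\ref{thm:JL} controls squared norms of single vectors, so one must route everything through $x+y$ and $x-y$ rather than attempting to control $\langle Ax, Ay\rangle$ directly, and that reusing the \emph{same} random matrix $A$ for both auxiliary vectors is harmless because the two failure events are simply combined by a union bound.
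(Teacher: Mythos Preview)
Your proof is correct and follows essentially the same approach as the paper: apply Theorem~\ref{thm:JL} to $x+y$ and $x-y$, combine via the polarization identity and parallelogram law, and union-bound the two failure events. The only cosmetic difference is that you handle both directions at once via the triangle inequality, whereas the paper bounds $\langle Ax,Ay\rangle$ from above and then remarks that the lower bound is symmetric.
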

\begin{proof}
Consider the two vectors $u = x+y$ and $v = x-y$. We apply Theorem \ref{thm:JL} to $u$ and $v$. By a union bound, except with probability $2\tau$ we have: $|||A(x+y)||_2^2 - ||x+y||_2^2| \leq \varsigma||x+y||_2^2$ and $|||A(x-y)||_2^2 - ||x-y||_2^2| \leq \varsigma||x-y||_2^2$. Therefore:
\begin{eqnarray*}
\langle (Ax),(Ay)\rangle  &=& \frac{1}{4}\left(\langle A(x+y), A(x+y)\rangle  - \langle A(x-y), A(x-y)\rangle \right) \\
&=& \frac{1}{4}\left(||A(x+y)||_2^2 + ||A(x-y)||_2^2 \right)\\
&\leq& \frac{1}{4}\left((1+\varsigma)||x+y||_2^2 - (1-\varsigma)||x-y||_2^2\right) \\
&=& \langle x, y\rangle  + \frac{\varsigma}{2}\left(||x||_2^2 + ||y||_2^2\right)
\end{eqnarray*}
An identical calculation shows that $\langle(Ax),(Ay)\rangle \geq \langle x, y\rangle  - \frac{\varsigma}{2}\left(||x||_2^2 + ||y||_2^2\right)$, which completes the proof.
\end{proof}

\begin{definition}[Random Projection Data Structure]
The random projection datastructure $\dba_r$ of size $T$ is composed of two parts: we write $\dba_r = (u, f)$.
\begin{enumerate}
\item $u \in \mathbb{R}^T$ is a vector of length $T$.
\item $f:[|\univ|\cdot T]\rightarrow \{-1/\sqrt{T},1/\sqrt{T}\}$ is a hash function implicitly representing a $T \times |\univ|$ projection matrix $A \in \{-1/\sqrt{T},1/\sqrt{T}\}^{T\times |\univ|}$. For any $(i,j) \in T \times |\univ|$, we write $A[i,j]$ for $f(|\univ|\cdot(i-1)+j)$.
\end{enumerate}
To evaluate a linear query $\query$ on a random projection datastructure $\dba_r = (u,f)$ we first project the query and then evaluate the projected query. To project the query we compute a vector $\widehat{\query} \in \mathbb{R}^T$ has follows. For each $i \in [T]$
$$\widehat{\query}[i] = \sum_{x \in \univ : \query(x) > 0}\query[x]\cdot A[i,x]$$
Then we output: $\query(\dba_r) = \frac{1}{n}\langle \widehat{\query}, u\rangle $.
\end{definition}

\begin{algorithm}
\caption{SparseProject takes as input a private database $\dba$ of size $n$, privacy parameters $\epsilon$ and $\delta$, a confidence parameter $\beta$, a sparsity parameter $m$, and the size of the target query class $k$.}
\textbf{SparseProject}($\dba, \epsilon,\delta, \beta, m, k$)
\begin{algorithmic}
\STATE \textbf{Let} $\tau \leftarrow \frac{\beta}{4k}$, $T \leftarrow 4\cdot 64^2\cdot \log\left(\frac{1}{\tau}\right)\left(\frac{m^{3/2}}{2} + \frac{n^4}{2\sqrt{m}}+\sqrt{m}n^2 \right)$,    $\sigma \leftarrow \frac{\epsilon}{\sqrt{8\ln(1/\delta)}}$
\STATE \textbf{Let} $f$ be a randomly chosen hash function from a family of $2\log(k T/2\beta)$-wise independent hash functions mapping $[T\times |\univ|] \rightarrow \{-1/\sqrt{T},1/\sqrt{T}\}$. Write $A[i,j]$ to denote $f(|\univ|\cdot(i-1)+j)$.
\STATE \textbf{Let} $u, \nu \in \mathbb{R}^T$ be a vectors of length $T$.
\FOR{$i = 1$ to $T$}
    \STATE \textbf{Let} $u_i \leftarrow \sum_{x : \dba[x] > 0}\dba[x]\cdot A[i,x]$
    \STATE \textbf{Let} $\nu_i \leftarrow \Lap(1/\sigma)$
\ENDFOR
\STATE \textbf{Output} $\dba_r = (u+\nu, f)$.
\end{algorithmic}
\end{algorithm}

\begin{remark}
There are various ways to select a hash function from a family of $r$-wise independent hash functions mapping $[T\times|\univ|] \rightarrow \{0,1\}$. The simplest, and one that suffices for our purposes, is to select the smallest integer $s$ such that $2^s \geq T \times |\univ|$, and then to let $f$ be a random degree $r$ polynomial in the finite field $\mathbb{GF}[2^s]$. Selecting and representing such a function takes time and space $O(r\cdot s) = O(r(\log |\univ| + \log T))$. $f$ is then an unbiased $r$-wise independent hash function mapping $\mathbb{GF}[2^s] \rightarrow \mathbb{GF}[2^s]$. Taking only the last output bit gives an unbiased $r$-wise independent hash function mapping $[T\times|\univ|]$ to  $\{0,1\}$, as desired.
\end{remark}

\begin{theorem}
SparseProject is $(\epsilon,\delta)$-differentially private.
\end{theorem}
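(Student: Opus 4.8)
The plan is to split the output $(u+\nu,f)$ into a data-independent component — the hash function $f$, equivalently the matrix $A\in\{-1/\sqrt T,\,1/\sqrt T\}^{T\times|\univ|}$ it encodes — and a data-dependent component, the noisy projection $u+\nu$, and to establish privacy of the latter via the Laplace mechanism (Theorem~\ref{thm:laplace-privacy}) applied coordinatewise, aggregating the privacy loss with advanced composition (Theorem~\ref{thm:compose}).

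First I would note that $f$ is drawn from public randomness, independent of $\dba$, so its distribution is the same on any two databases. Hence it suffices to show that, for every \emph{fixed} realization of $A$, the mechanism $\dba\mapsto u(\dba)+\nu$ is $(\epsilon,\delta)$-differentially private, where $u(\dba)_i=\sum_{x:\dba[x]>0}\dba[x]\,A[i,x]=\langle A_i,\dba\rangle$ with $A_i$ the $i$-th row of $A$: averaging the conditional $(\epsilon,\delta)$-bound over the distribution of $f$ then yields the unconditional bound for SparseProject, since for any event $S$ one has $\Pr[M(\dba)\in S]=\mathbb{E}_f\big[\Pr[u(\dba)+\nu\in S_f\mid f]\big]\le \mathbb{E}_f\big[e^\epsilon\Pr[u(\dbb)+\nu\in S_f\mid f]+\delta\big]=e^\epsilon\Pr[M(\dbb)\in S]+\delta$.

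Next, fix $A$ and bound the sensitivity of $u$ coordinate by coordinate. For neighboring $\dba,\dbb$ with $|\dba\triangle\dbb|\le 1$, the difference $\dba-\dbb$ (viewing databases as integer vectors in $\N^{|\univ|}$) is either $0$ or $\pm e_z$ for a single $z\in\univ$, so for every $i\in[T]$ we have $|u(\dba)_i-u(\dbb)_i|=|\langle A_i,\dba-\dbb\rangle|=|A[i,z]|=1/\sqrt T$. Thus $M_i(\dba):=u(\dba)_i+\Lap(1/\sigma)$ is the Laplace mechanism with noise scale $1/\sigma=\sqrt{8\ln(1/\delta)}/\epsilon$ applied to a quantity of sensitivity $1/\sqrt T$, and by Theorem~\ref{thm:laplace-privacy} it is $(\epsilon',0)$-differentially private for $\epsilon'=(1/\sqrt T)\cdot\sigma=\epsilon/\sqrt{8T\ln(1/\delta)}$.

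Finally, $u(\dba)+\nu=(M_1(\dba),\dots,M_T(\dba))$ is the composition of $T$ mechanisms each $(\epsilon',0)$-differentially private with independent noise, and $\epsilon'=\epsilon/\sqrt{8T\ln(1/\delta)}$ exactly meets the hypothesis of Theorem~\ref{thm:compose}, which therefore certifies that the composed release is $(\epsilon,\delta)$-differentially private; combined with the reduction in the second paragraph, SparseProject is $(\epsilon,\delta)$-differentially private. No step here is a genuine obstacle; the only care required is the bookkeeping — verifying that the per-coordinate sensitivity is $1/\sqrt T$ (rather than $2/\sqrt T$) under the symmetric-difference notion of neighboring databases, and that this sensitivity times $\sigma$ lands precisely on the $\epsilon/\sqrt{8T\ln(1/\delta)}$ budget that advanced composition over $T$ coordinates can afford. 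This last point is exactly why one cannot instead invoke the $\ell_1$-sensitivity of $u$, which is $T\cdot(1/\sqrt T)=\sqrt T$ and far too large, together with a single pure-$\epsilon$-DP Laplace mechanism.
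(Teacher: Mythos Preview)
Your proposal is correct and follows essentially the same route as the paper's own proof: observe that $f$ is data-independent so releasing it is free, bound the per-coordinate sensitivity of $u_i$ by $1/\sqrt{T}$, invoke the Laplace mechanism (Theorem~\ref{thm:laplace-privacy}) to get $(\epsilon/\sqrt{8T\ln(1/\delta)},0)$-privacy per coordinate, and then apply advanced composition (Theorem~\ref{thm:compose}) over the $T$ coordinates. Your treatment is slightly more explicit than the paper's in spelling out the conditioning-on-$f$ argument and the exact form of $\dba-\dbb$, but the structure and key steps are identical.
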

\begin{proof}
For each $i$, write $u_i(\dba) = \sum_{x : \dba[x] > 0}\dba[x]\cdot A[i,x]$. Note that because each entry of $A$ has magnitude $1/\sqrt{T}$, for any database $\dbb$ that is neighboring with $\dba$, $|u_i(\dba) - u_i(\dbb)| \leq 1/\sqrt{T}$. Therefore by Theorem \ref{thm:laplace-privacy}, releasing $u_i + \nu_i$ preserves $(\epsilon/(\sqrt{8T\ln(1/\delta)}), 0)$-differential privacy. We may now apply the composition Theorem \ref{thm:compose} to find that releasing all $T$ coordinates of $u + \nu$ preserves $(\epsilon,\delta)$-differential privacy. Note that $f$ was chosen independently of $\dba$, and releasing it has no privacy cost.
\end{proof}

We first give a high probability bound on the maximum magnitude of any coefficient $\widehat{\query}_i$ of a projected query for any query $\query \in \queryset$. If we were using a random sign matrix for our projection, the following lemma would be a consequence of a simple Chernoff bound, but because we are using only a limited independence family of random variables, we must be more careful.

\begin{lemma}
Let $\queryset$ be a collection of $m$-sparse linear queries of size $|\queryset| = k$,  and $A \in \mathbb{R}^{T\times |\univ|}$ be a matrix with $r$-wise independent entries taking values in $\{-1/\sqrt{T},1/\sqrt{T}\}$, for some even integer $r$. Denote the projection of $\query \in \queryset$ by $A$ by $\widehat{\query} \in \mathbb{R}^T$.  Then except with probability at most $\beta$
$$\max_{\query \in \queryset}\max_{i \in [T]}|\widehat{\query}[i]| \leq \left(\frac{k\cdot T}{2\beta}\right)^{1/r}\cdot \frac{2\sqrt{m r}}{\sqrt{T}}$$
\end{lemma}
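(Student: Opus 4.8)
The plan is to bound the moment $\Ex[\widehat{\query}[i]^r]$ for a fixed query $\query$ and fixed coordinate $i$, then apply Markov's inequality to the $r$-th power and take a union bound over all $kT$ choices of $(\query, i)$. Fix $\query$ with support $S$, $|S| \le m$, and fix $i$. Write $\widehat{\query}[i] = \sum_{x \in S} \query[x] A[i,x] = \tfrac{1}{\sqrt T}\sum_{x \in S} \query[x]\sigma_x$, where the $\sigma_x \in \{-1,1\}$ are $r$-wise independent and $\Ex[\sigma_x]=0$. Expanding $\big(\sum_x \query[x]\sigma_x\big)^r$ and using that $r$-wise independence makes the expectation of any monomial involving $\le r$ distinct variables factor, every monomial in which some $\sigma_x$ appears to an odd power has expectation zero. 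So only monomials where every variable appears an even power survive, and since each such monomial involves at most $r/2$ distinct variables, we only need $r$-wise independence (in fact $(r/2)$-wise would do for the expectation, but we will want the full $r$-wise for the tail). This is the standard moment computation underlying the limited-independence Chernoff/JL bounds.

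The key estimate is $\Ex\big[\big(\sum_{x\in S}\query[x]\sigma_x\big)^r\big] \le (r-1)!! \cdot \big(\sum_{x\in S}\query[x]^2\big)^{r/2} \le r^{r/2} m^{r/2}$, using $\query[x]\in[0,1]$ so $\sum_{x\in S}\query[x]^2 \le m$, and the fact that for a sum of independent (here $r$-wise independent suffices) mean-zero $\pm 1$ variables the $r$-th moment is dominated by the Gaussian case, giving the $(r-1)!! \le r^{r/2}$ combinatorial factor. Concretely, expand the $r$-th power as a sum over functions assigning powers summing to $r$ to the variables; group by the partition of the $r$ index-slots into blocks of equal variable; only all-even partitions contribute; bound the number of such partitions by $(r-1)!!$ and each surviving monomial's expectation by $\prod \query[x]^{2a_x} \le$ the corresponding term in $(\sum \query[x]^2)^{r/2}$ via the multinomial theorem. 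Hence $\Ex[\widehat{\query}[i]^r] \le T^{-r/2} r^{r/2} m^{r/2} = (rm/T)^{r/2}$.

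Now Markov on $\widehat{\query}[i]^r$ (legitimate since $r$ is even, so $\widehat{\query}[i]^r \ge 0$) gives, for any threshold $\lambda>0$, $\Pr[|\widehat{\query}[i]| \ge \lambda] \le \Ex[\widehat{\query}[i]^r]/\lambda^r \le (rm/T)^{r/2}/\lambda^r$. Choosing $\lambda = \big(\tfrac{kT}{2\beta}\big)^{1/r}\cdot \tfrac{2\sqrt{mr}}{\sqrt T}$ makes $\lambda^r = \tfrac{kT}{2\beta}\cdot\tfrac{(4mr)^{r/2}}{T^{r/2}}$, so the bound becomes $(rm/T)^{r/2} \cdot \tfrac{2\beta}{kT}\cdot\tfrac{T^{r/2}}{(4mr)^{r/2}} = \tfrac{2\beta}{kT}\cdot 4^{-r/2} \le \tfrac{\beta}{kT}$ (since $r\ge 2$ makes $4^{-r/2}\le 1/4 < 1/2$, giving at least the factor we need). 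A union bound over the $k$ queries and $T$ coordinates then yields failure probability at most $\beta$, as claimed.

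The main obstacle is the moment bound in the second paragraph: one must be careful that $r$-wise independence really is enough. The point is that every surviving monomial in the expansion of the $r$-th power involves at most $r/2 \le r$ distinct variables (since each appears to an even, hence $\ge 2$, power), so $r$-wise independence lets its expectation factor exactly as in the fully independent case; no monomial with more than $r$ distinct variables survives, and those with between $r/2$ and $r$ distinct variables have expectation zero anyway (odd powers). Getting the clean $(r-1)!!$ (or just $r^{r/2}$) constant without overcounting partitions is the one genuinely fiddly combinatorial step; everything else is Markov plus a union bound.
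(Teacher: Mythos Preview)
Your proof is correct and reaches the same conclusion, but the route differs from the paper's. The paper follows the Bellare--Rompel template: it first upper-bounds $\widehat{\query}[i]$ by a sum of $m$ unit-coefficient Rademacher variables (throwing away the $\query[x]\in[0,1]$ weights), rewrites that sum as $2B-m$ for a Bernoulli sum $B$, observes that $r$-wise independence makes $\Ex[(B-m/2)^r]$ equal to the fully-independent value, and then computes that moment indirectly via the identity $\Ex[|X|^r]=\int_0^\infty \Pr[|X|>t^{1/r}]\,dt$ together with a Chernoff--Hoeffding tail bound and Stirling's approximation. You instead bound the $r$-th moment by the direct combinatorial (Khintchine-type) expansion, keeping the actual coefficients $\query[x]$ and using $\sum_x \query[x]^2\le m$ only at the end. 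Your approach is more elementary---no Chernoff, no Stirling, no reduction to $\{0,1\}$ coefficients---and the $(r-1)!!$ constant you obtain is in fact tighter than what the paper's integral-of-the-tail method yields; the paper's route, on the other hand, is a black-box application of a known tail bound and avoids the partition-counting argument you flag as ``fiddly.'' Either way, the final step (Markov on the $r$-th moment plus a union bound over $kT$ pairs) is identical.
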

\begin{proof}
We follow the approach of Bellare and Rompel \cite{BR94, DPBook}. Recall that for any query $\query$,  $\widehat{\query} \in \mathbb{R}^T$ is defined to be the vector such that $\widehat{\query}[i] = \sum_{x \in \univ : \query(x) > 0}\query[x]\cdot A[i,x]$. Note that each coordinate is dominated by the sum of at most $m$ $r$-wise independent Rademacher random variables (i.e. Bernoulli random variables taking values in $\{-1,1\}$): $\query[i] \leq \frac{1}{\sqrt{T}}\sum_{i=1}^m R_i$, and so it is sufficient to bound this sum. Equivalently, we can write $\query[i] \leq \frac{1}{\sqrt{T}}\left(2\sum_{i=1}^mB_i - m\right)$, where the $B_i$s are $r$-wise independent Bernoulli random variables. Let $B = \sum_{i=1}^m B_i$. By Markov's inequality, we have:
\begin{equation}\
\label{eq:markov}
\Pr\left[|B - \frac{m}{2}| > t\right] = \Pr\left[(B-\frac{m}{2})^r > t^r\right] < \frac{\Ex\left[(B-\frac{m}{2})^r\right]}{t^r}
\end{equation}
Note that because the $B_i$s are $r$-wise independent, we have $\Ex\left[(B-\frac{m}{2})^r\right] = \Ex\left[(\widehat{B}-\frac{m}{2})^r\right]$ where $\widehat{B}$ is the sum of $m$ \emph{truly} independent Bernoulli random variables. We can therefore apply a standard Chernoff bound to control $\widehat{B}$:
\begin{eqnarray*}
\Ex\left[(\widehat{B}-m/2)^r\right] &=& \int_{0}^{\infty}\Pr\left[|\widehat{B}-m/2| > t^{1/r}\right] dt \\
&\le& \int_{0}^{\infty}\exp\left(-\frac{2t^{2/r}}{m}\right) dt \\
&=& \left(\frac{m}{2}\right)^{r/2}\left(\frac{r}{2}\right)! \\
&\leq& e^{1/6r}\sqrt{\pi r}\left(\frac{mr}{4e}\right)^{r/2}
\end{eqnarray*}
where the first inequality follows by a Chernoff bound and the second inequality follows by Stirlings approximation\footnote{The form of Stirlings approximation that we use is: $$k! < e^{1/(12k)}\sqrt{2\pi k}\left(\frac{k}{e}\right)^k$$}. Plugging this in to Equation \ref{eq:markov}, we find:
\begin{equation}
\label{eqn:limited-chernoff}
\Pr\left[|B - \frac{m}{2}| > t\right] < 2\left(\frac{mr}{t^2}\right)^{r/2}
\end{equation}
Recall that $|\widehat{\query}[i]| > c$ if and only if $|B-\frac{m}{2}| > \frac{\sqrt{T}}{2}\cdot c$. Applying Equation \ref{eqn:limited-chernoff} and taking a union bound over all $k$ queries and $T$ indices per query proves the lemma.
\end{proof}
\begin{corollary}
\label{cor:magnitude}
Let $\queryset$ be a collection of $m$-sparse linear queries of size $|\queryset| = k$,  and $A \in \mathbb{R}^{T\times |\univ|}$ be a matrix with $r$-wise independent entries taking values in $\{-1/\sqrt{T},1/\sqrt{T}\}$, for some integer $r > \log\left(\frac{kT}{2\beta}\right)$. Denote the projection of $\query \in \queryset$ by $A$ by $\widehat{\query} \in \mathbb{R}^T$.  Then except with probability at most $\beta$
$$\max_{\query \in \queryset}\max_{i \in [T]}|\widehat{\query}[i]| \leq 4 \cdot \frac{\sqrt{m \log(kT/2\beta)}}{\sqrt{T}}$$
\end{corollary}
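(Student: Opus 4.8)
The plan is to obtain the corollary directly from the preceding lemma by instantiating the independence parameter at (essentially) its optimal value and then simplifying. The lemma guarantees that, except with probability at most $\beta$,
$$\max_{\query \in \queryset}\max_{i \in [T]}|\widehat{\query}[i]| \leq \left(\frac{kT}{2\beta}\right)^{1/r}\cdot \frac{2\sqrt{mr}}{\sqrt{T}},$$
and, viewed as a function of $r$, the right-hand side is a product of a decreasing factor $(kT/2\beta)^{1/r}$ and an increasing factor $\sqrt{r}$, which are in balance precisely when $r$ is within a constant factor of $\log(kT/2\beta)$. So the first step is to fix a convenient value $r_0$ to plug in: take $r_0$ to be an even integer with $\log(kT/2\beta)\le r_0 = \Theta(\log(kT/2\beta))$ (for instance the smallest even integer that is at least $\log(kT/2\beta)$). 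Since $A$ is assumed $r$-wise independent for some integer $r>\log(kT/2\beta)$, and $r$-wise independence implies $r'$-wise independence for every $r'\le r$, the matrix $A$ is in particular $r_0$-wise independent, so the lemma applies with $r_0$ in place of $r$; this downward-closure observation is what bridges the (slightly looser) hypothesis of the corollary and the even-$r$ requirement of the lemma.

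The second step is to bound the two factors in the lemma's estimate. For $r_0\ge\log(kT/2\beta)$ we have $(kT/2\beta)^{1/r_0}\le 2$, and since $r_0=\Theta(\log(kT/2\beta))$ we have $\sqrt{r_0}\le c\sqrt{\log(kT/2\beta)}$ for an absolute constant $c$. Substituting gives
$$\max_{\query \in \queryset}\max_{i \in [T]}|\widehat{\query}[i]| \le 2\cdot\frac{2\sqrt{m r_0}}{\sqrt{T}} \le 4c\,\frac{\sqrt{m\log(kT/2\beta)}}{\sqrt{T}},$$
and the last step is simply to verify that with the logarithm taken to base $2$ and $r_0$ the smallest even integer at least $\log(kT/2\beta)$, the hidden constant is at most $4$, matching the claimed bound. (The slack between the literal optimum $2\sqrt{\log(kT/2\beta)}$ and the stated $4\sqrt{\log(kT/2\beta)}$ is exactly what absorbs the rounding of $\log(kT/2\beta)$ up to an even integer.)

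There is no genuine obstacle here; this is a routine simplification. The only points requiring a little care are the bookkeeping of the constants and of the logarithm base so that the final constant is $4$ rather than some larger absolute constant, and the trivial but necessary observation that we may invoke the lemma at a convenient even value $r_0\le r$ by downward closure of $r$-wise independence.
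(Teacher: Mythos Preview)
Your proposal is correct and matches the paper's intended derivation: the corollary is stated without proof and is meant to follow directly from the preceding lemma by substituting $r \approx \log(kT/2\beta)$, which yields $(kT/2\beta)^{1/r}\le 2$ and $\sqrt{r}\approx\sqrt{\log(kT/2\beta)}$. Your explicit use of downward closure of $r$-wise independence to pass to an even $r_0\le r$ is a detail the paper glosses over, and your observation about the constant absorbing the rounding is the right way to reconcile the lemma's hypothesis with the corollary's.
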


We will also make use of a tail bound for sums of Laplace random variables. This bound is likely well known. We use a version proven in \cite{GRU11}.
\begin{lemma}[\cite{GRU11}]
\label{lem:laplaceconcentration}
Suppose that $\{Y_i\}_{i=1}^T$ are i.i.d. $\Lap(b)$ random variables, and scalars $q_i \in [-B,B]$. Define $Y = \sum_{i=1}^Tq_iY_i$. Then:
$$\Pr[|Y| \geq B\alpha] \leq \left\{
                             \begin{array}{ll}
                               \exp\left(-\frac{\alpha^2}{6Tb^2}\right), & \hbox{If $\alpha \leq Tb$;} \\
                               \exp\left(-\frac{\alpha}{6b}\right), & \hbox{If $\alpha > Tb$.}
                             \end{array}
                           \right.$$
\end{lemma}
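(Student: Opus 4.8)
The plan is to prove this by a standard Chernoff-style argument built on the moment generating function (MGF) of the Laplace distribution; this is exactly the route that produces a two-regime (sub-Gamma) tail of the form stated. Recall that for $Y_i \sim \Lap(b)$ and any $|t| < 1/b$ the MGF is $\Ex[e^{tY_i}] = 1/(1-t^2b^2)$. Since the Laplace distribution is symmetric about $0$, for a fixed scalar $q_i$ with $|q_i| \le B$ we have $\Ex[e^{t q_i Y_i}] = 1/(1 - t^2 q_i^2 b^2) \le 1/(1-t^2B^2b^2)$ whenever $t^2B^2b^2 < 1$. Restricting to $t \le c_0/(Bb)$ for a suitable absolute constant $c_0 < 1$ and using the elementary inequality $1/(1-x) \le e^{c_1 x}$ (valid for $x \le c_0^2$ with a constant $c_1$ slightly larger than $1$) together with independence of the $Y_i$ gives
\[
\Ex\!\left[e^{tY}\right] \;=\; \prod_{i=1}^{T}\Ex\!\left[e^{t q_i Y_i}\right] \;\le\; \exp\!\left(c_1\, T\, t^2 B^2 b^2\right)
\]
for all $t \in [0,\, c_0/(Bb)]$.

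Then I would apply Markov's inequality to $e^{tY}$, yielding $\Pr[Y \ge B\alpha] \le \exp(-tB\alpha + c_1 T t^2 B^2 b^2)$, and optimize the exponent over $t$ subject to $t \le c_0/(Bb)$. The unconstrained minimizer is $t^\star = \alpha/(2 c_1 T B b^2)$, at which the exponent equals $-\alpha^2/(4 c_1 T b^2)$. If $\alpha$ is small enough that $t^\star$ is admissible — which, after fixing $c_0,c_1$, is precisely the regime $\alpha \le Tb$ — this substitution gives the Gaussian-type tail $\exp(-\Omega(\alpha^2/(Tb^2)))$. If $\alpha > Tb$, the optimizer is clipped to the boundary $t = c_0/(Bb)$; then the linear term $-tB\alpha$ contributes $-\Theta(\alpha/b)$ while the quadratic term contributes only $\Theta(T) = O(\alpha/b)$, so the exponent is $-\Omega(\alpha/b)$, giving the sub-exponential tail. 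Running the identical computation on $-Y$ and taking a union bound converts these one-sided estimates into the two-sided bound on $|Y|$.

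The only genuinely fiddly step — and the one I expect to be the main obstacle — is calibrating the constants $c_0$ and $c_1$ (equivalently, the interval on which $1/(1-x)\le e^{c_1 x}$ is invoked and the point at which $t$ is clipped) so that the crossover lands exactly at $\alpha = Tb$ and both exponents come out with the clean constant $1/6$; in particular one must check that in the clipped regime the residual additive $+\Theta(T)$ term is strictly dominated by $-\Theta(\alpha/b)$ for all $\alpha > Tb$, which pins down $c_0$ fairly tightly (the choice $c_0 = 1/3$, $c_1 = 3/2$ does the job, since $1/(1-x)\le e^{3x/2}$ comfortably on $[0,1/9]$ and $2 c_0 c_1 = 1$). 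The factor $2$ from the two-sided union bound then also has to be folded in, which is harmless in the regime where this lemma is actually applied (there $\alpha$ is large). Since the statement is quoted verbatim from \cite{GRU11}, I would ultimately just cite their calculation for the exact constants rather than re-deriving them in full.
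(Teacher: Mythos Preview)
The paper does not actually prove this lemma: it is stated with a citation to \cite{GRU11} and used as a black box, so there is no ``paper's own proof'' to compare against. Your final remark --- that you would ultimately just cite \cite{GRU11} for the exact constants --- is precisely what the paper does.

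That said, your sketch is correct and is the standard sub-Gamma/Chernoff derivation. In fact your choice $c_0=1/3$, $c_1=3/2$ lands exactly on the stated constants: the unconstrained optimizer $t^\star=\alpha/(3TBb^2)$ is admissible iff $\alpha\le Tb$, and plugging it in gives exponent $-\alpha^2/(6Tb^2)$; clipping at $t=1/(3Bb)$ gives exponent $-\alpha/(3b)+T/6\le -\alpha/(6b)$ precisely when $\alpha\ge Tb$. The only residual looseness you flag --- the factor $2$ from the union bound over $\{Y\ge B\alpha\}\cup\{-Y\ge B\alpha\}$ --- is real: the lemma as stated omits it, so either \cite{GRU11} absorbs it into slightly different constants or the $2$ is simply suppressed. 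This does not affect any downstream use in the paper.
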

We can now prove a utility theorem for SparseProject:
\begin{theorem}
For any $0 < \epsilon,\delta < 1$, and any $\beta < 1$, and with respect to any class of $m$-sparse linear queries $\queryset \subset \queryset_m$ of cardinality $|\queryset| \leq k$, SparseProject is $(\alpha,\beta)$-accurate for:
$$\alpha = \tilde{O}\left(\log\left(\frac{k}{\beta}\right)\frac{\sqrt{m\log\left(\frac{1}{\delta}\right)}}{\epsilon n}\right)$$
where the $\tilde{O}$ hides a term logarithmic in $(m + n)$.
\end{theorem}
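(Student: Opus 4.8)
The plan is to split the error $|\query(\dba)-\query(\dba_r)|$ into a projection-distortion term and a Laplace-noise term and bound each separately, finishing with a union bound over the $k$ queries in $\queryset$. The algorithm sets $u = A\dba$, where $A$ is the implicitly-defined $\pm 1/\sqrt T$ matrix given by $f$, and outputs $\dba_r = (u+\nu,f)$; $\mathrm{Eval}$ then forms $\widehat{\query} = A\query$ and returns $\query(\dba_r) = \frac1n\langle\widehat{\query},u+\nu\rangle$. Since $\query(\dba) = \frac1n\langle\query,\dba\rangle$, the triangle inequality gives
$$\bigl|\query(\dba) - \query(\dba_r)\bigr| \;\le\; \frac1n\bigl|\langle A\query, A\dba\rangle - \langle\query,\dba\rangle\bigr| \;+\; \frac1n\bigl|\langle A\query,\nu\rangle\bigr|.$$

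For the first term, observe that the chosen $T$ equals $4\cdot 64^2\varsigma^{-2}\log(1/\tau)$ with $\varsigma^{-2} = \frac{m^{3/2}}{2}+\frac{n^4}{2\sqrt m}+\sqrt m\, n^2 = \frac{(m+n^2)^2}{2\sqrt m}$, i.e.\ $\varsigma = \sqrt 2\, m^{1/4}/(m+n^2)$, and that $r = 2\log(kT/2\beta) \ge 2\log(1/\tau)$, so the hypotheses of Corollary~\ref{cor:JL} hold. Using $\|\query\|_2^2 \le m$ (the query is $m$-sparse with entries in $[0,1]$) and $\|\dba\|_2^2 \le \|\dba\|_1^2 = n^2$, Corollary~\ref{cor:JL} applied to each query, with a union bound costing $2k\tau = \beta/2$, shows that except with probability $\beta/2$,
$$\max_{\query\in\queryset}\frac1n\bigl|\langle A\query,A\dba\rangle - \langle\query,\dba\rangle\bigr| \;\le\; \frac{\varsigma}{2n}(m+n^2) \;=\; \frac{m^{1/4}}{\sqrt 2\, n},$$
which is $O(\sqrt m/(\epsilon n))$ since $\epsilon < 1 \le m^{1/4}$, hence is dominated by the claimed bound.

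For the noise term, I would first apply Corollary~\ref{cor:magnitude} (its hypothesis is again met by $r = 2\log(kT/2\beta)$) to obtain $B := 4\sqrt{m\log(kT/2\beta)}/\sqrt T$ as a uniform upper bound on $|\widehat{\query}[i]|$ over all $\query\in\queryset$ and $i\in[T]$, except with probability $\le\beta/4$. Condition on $f$ and on this event; then $\langle A\query,\nu\rangle = \sum_{i}\widehat{\query}[i]\,\nu_i$ is a weighted sum of $T$ i.i.d.\ $\Lap(1/\sigma)$ variables with coefficients in $[-B,B]$, where $1/\sigma = \sqrt{8\ln(1/\delta)}/\epsilon$. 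Lemma~\ref{lem:laplaceconcentration} (one checks $T \ge 6\log(4k/\beta)$, placing us in the sub-Gaussian regime) together with a union bound over the $k$ queries gives, except with probability $\beta/4$,
$$\max_{\query\in\queryset}\frac1n\bigl|\langle A\query,\nu\rangle\bigr| \;\le\; \frac{B}{n\sigma}\sqrt{6T\log(4k/\beta)} \;=\; O\!\left(\frac{\sqrt{m\,\ln(1/\delta)\,\log(kT/\beta)\,\log(k/\beta)}}{\epsilon\, n}\right).$$
Since $\log T = O(\log(m+n))$, this is $\tilde{O}\bigl(\log(k/\beta)\sqrt{m\log(1/\delta)}/(\epsilon n)\bigr)$ with the $\tilde{O}$ absorbing a factor logarithmic in $m+n$; a final union bound over the three events (total failure probability $\beta$) yields the theorem.

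The main obstacle is the noise term: because $A$ has only limited independence, a naive Chernoff bound on $\widehat{\query}[i]$ is unavailable, so Corollary~\ref{cor:magnitude} (the Bellare--Rompel moment argument already carried out in the excerpt) is indispensable, and the conditioning must be ordered correctly --- fix $f$ and invoke the magnitude bound first, then take probability over the independent Laplace noise $\nu$. The only other care needed is routine bookkeeping: checking that the parameter settings $\tau = \beta/4k$ and $r = 2\log(kT/2\beta)$ make every invoked hypothesis hold, and that the three confidence budgets sum to $\beta$.
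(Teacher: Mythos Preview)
Your proposal is correct and follows essentially the same approach as the paper: split the error into a Johnson--Lindenstrauss distortion term (handled via Corollary~\ref{cor:JL}) and a Laplace-noise term (handled by first bounding the projected-query coordinates via Corollary~\ref{cor:magnitude} and then applying Lemma~\ref{lem:laplaceconcentration}), with the same union-bound bookkeeping over $\beta/2+\beta/4+\beta/4$. In fact your computation of $\varsigma = \sqrt{2}\,m^{1/4}/(m+n^2)$ from the algorithm's $T$ is more careful than the paper's own proof, which states $\varsigma = 2\sqrt m/(m+n^2)$ and obtains a distortion bound of $\sqrt m$ rather than your $m^{1/4}/\sqrt 2$; both are absorbed into the final $\tilde O$ bound.
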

\begin{proof}
Let $\dba_r = (\hat{u},f)$ be the random-projection data-structure output by SparseQueries, where $\hat{u} = u + \nu$. Consider any fixed query $\query \in \queryset$. Let $\widehat{\query} \in \mathbb{R}^T$ denote the projection of $\query$ by the matrix implicitly defined by $f$. We have:
$$\query(\dba_r) = \frac{1}{n}\langle \widehat{\query}, \hat{u} \rangle = \frac{1}{n}\left(\langle \widehat{\query}, u \rangle + \langle \widehat{\query}, \nu \rangle\right)$$
We will have two sources of error: distortion from the random projection, which we will analyze using the Johnson-Lindenstrauss lemma, and error introduced because of the Laplace noise added for privacy. We will analyze each source separately, starting with the error from the random projection.

Recall that we selected $\tau = \frac{\beta}{4k}$ and $T = 4\cdot 64^2 \varsigma^{-2}\log(1/\tau)$ for $\varsigma = \frac{2\sqrt{m}}{m+n^2}$. Therefore, applying Corollary \ref{cor:JL} together with a union bound over all $k$ queries $\query \in \queryset$, except with probability at most $\beta/2$:
\begin{eqnarray*}
\max_{Q \in \queryset}|\langle \query, \dba \rangle - \langle \widehat{\query}, u \rangle| &\le& \frac{\varsigma}{2}(||\dba||_2^2 + ||\query||_2^2) \\
&\leq& \frac{\varsigma}{2}(n^2 + m) \\
&=& \sqrt{m}
\end{eqnarray*}

We now consider the error introduced by the Laplace noise $\nu$. We first apply Corollary \ref{cor:magnitude} to see that except with probability at most $\beta/4$, we have:
$$\max_{\query \in \queryset}\max_{i \in [T]}|\widehat{\query}[i]| \leq 4 \cdot \frac{\sqrt{m \log(2kT/\beta)}}{\sqrt{T}}$$
Conditioning on this event occurring, we may apply Lemma \ref{lem:laplaceconcentration} with $B = 4 \cdot \frac{\sqrt{m \log(2kT/\beta)}}{\sqrt{T}}$ together with a union bound over all $k$ queries $\query \in \queryset$, to find that except with probability at most $\beta/4$:
\begin{eqnarray*}
\max_{\query \in \queryset} |\langle \widehat{\query}, \nu \rangle| &\leq& 4\sqrt{6 m\left(\frac{1}{\sigma}\right)^2\log\left(\frac{4 k}{\beta}\right)\left(\log\left(\frac{2 k}{\beta}\right) + \log T\right)} \\
&=& \frac{16\sqrt{3}}{\epsilon}\sqrt{m\log\left(\frac{4 k}{\beta}\right)\log\left(\frac{1}{\delta}\right)\left(\log\left(\frac{2 k}{\beta}\right) + \log T\right)} \\
&=& \tilde{O}\left(\log\left(\frac{k}{\beta}\right)\frac{\sqrt{m\log\left(\frac{1}{\delta}\right)}}{\epsilon}\right)
\end{eqnarray*}
where the $\tilde{O}$ is hiding a $\log(T)$ term which is logarithmic in $m$ and $n$.

Finally we can complete the proof. We have shown that except with probability at most $\beta$:
\begin{eqnarray}
\max_{\query \in \queryset}|\query(\dba) - \query(\dba_r)| &=& \frac{1}{n} \max_{\query \in \queryset} | \langle \query, \dba \rangle -  \langle \widehat{\query}, \hat{u} \rangle | \\
&\leq& \frac{1}{n} \max_{\query \in \queryset} \left(|\langle \query, \dba \rangle - \langle \widehat{\query}, u \rangle| + |\langle \widehat{\query}, \nu \rangle|\right) \\
&\leq& \frac{1}{n}\left(\sqrt{m} + \tilde{O}\left(\log\left(\frac{k}{\beta}\right)\frac{\sqrt{m\log\left(\frac{1}{\delta}\right)}}{\epsilon}\right)\right) \\
&=& \tilde{O}\left(\log\left(\frac{k}{\beta}\right)\frac{\sqrt{m\log\left(\frac{1}{\delta}\right)}}{\epsilon n}\right)
\end{eqnarray}
which completes the proof.
\end{proof} 

\subsection{Applications to Conjunctions}
In this section, we again briefly briefly mention a simple application of our non-interactive mechanism to the problem of releasing conjunctions with many literals. This gives the first polynomial time algorithm for non-interactively releasing a super-polynomially sized set of conjunctions.
\begin{definition}
Recall that a conjunction is a linear query specified by a subset of variables $S \subseteq [d]$, and defined by the predicate $Q_S:\{0,1\}^d\rightarrow \{0,1\}$ where $Q_S(x) = \prod_{i \in S} x_i$. We say that a conjunction $Q_S$ has $t$ literals if $|S| = t$.
\end{definition}
\begin{remark}
The set of all conjunctions of $d - k$ literals, denoted $C_{d-k}$ is $2^k$ sparse, and of size $|C_{d-k}| \leq d^k$.
\end{remark}
Sparseproject therefore gives the following corollary:
\begin{corollary}
There exists an $(\epsilon,\delta)$-differentially private algorithm in the non-interactive release setting with polynomially bounded running time, that is $(\alpha, \beta)$-accurate for the class of conjunctions $C_{d-\log n}$ on $d - \log n$ literals for:
$$\alpha = \tilde{O}\left(\left(\log n\log d + \log\frac{1}{\beta}\right)\frac{\sqrt{\log\left(\frac{1}{\delta}\right)}}{\epsilon \sqrt{n}}\right)$$
\end{corollary}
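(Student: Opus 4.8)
The plan is to derive this corollary by a direct instantiation of \textbf{SparseProject} together with its privacy and utility theorems. First I would recall, using the remark immediately preceding the corollary, that every conjunction in $C_{d-\log n}$ is $2^{\log n} = n$-sparse, so we take the sparsity parameter $m = n$; and that $|C_{d-\log n}| \leq d^{\log n}$, so we take $k = d^{\log n}$, which gives $\log k \leq \log n \cdot \log d$. I would also invoke the remark from Section~\ref{sec:prelims} that the nonzero entries of a conjunction on $d-\log n$ literals can be enumerated in time linear in $n$ (simply enumerate the $2^{\log n}=n$ assignments to the $\log n$ unassigned variables), so the enumeration assumption required by \textbf{SparseProject} is satisfied for this query class.

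Granting these, I would run $\textbf{SparseProject}(\dba,\epsilon,\delta,\beta,n,d^{\log n})$. Privacy is immediate: the privacy theorem for \textbf{SparseProject} holds for all settings of $m$ and $k$, so the mechanism is $(\epsilon,\delta)$-differentially private. For accuracy I would substitute $m=n$ and $k \leq d^{\log n}$ into the utility theorem for \textbf{SparseProject}, which yields $(\alpha,\beta)$-accuracy with
$$\alpha = \tilde{O}\left(\log\left(\frac{k}{\beta}\right)\frac{\sqrt{m\log(1/\delta)}}{\epsilon n}\right);$$
here $\sqrt{m}/n = 1/\sqrt{n}$, $\log(k/\beta) \leq \log n \log d + \log(1/\beta)$, and the factor hidden by $\tilde{O}$ is logarithmic in $m+n = O(n)$, hence logarithmic in $n$. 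Collecting these terms reproduces exactly the claimed bound $\alpha = \tilde{O}\left(\left(\log n \log d + \log\frac{1}{\beta}\right)\frac{\sqrt{\log(1/\delta)}}{\epsilon\sqrt{n}}\right)$.

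The one part that requires genuine verification --- and the main obstacle, though it is routine arithmetic --- is that the running time stays polynomial. Here $\univ=\{0,1\}^d$, so $\log|\univ|=d$; with $m=n$ the projection dimension set by \textbf{SparseProject} becomes a fixed polynomial in $n$ (reading the algorithm's formula literally, $T = \tilde{O}(n^{7/2})$), the independence parameter is $r = 2\log(kT/2\beta) = O(\log n \log d + \log(1/\beta))$, and selecting and representing the hash function costs $O(r(\log|\univ|+\log T)) = O(r(d+\log n))$, polynomial in $n$ and $d$. Forming the vector $u$ costs $T$ sums, each over the at most $n$ nonzero coordinates of $\dba$, hence $\mathrm{poly}(n)$ hash evaluations; evaluating a conjunction on the output data structure costs $O(Tn)$ hash evaluations to form $\widehat{\query}$ (using the linear-time enumeration of the nonzero entries of the conjunction) followed by an $O(T)$-time inner product. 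All of these are polynomial in $n$ and $d$, so both the mechanism and its $\mathrm{Eval}$ procedure run in polynomial time, which completes the argument.
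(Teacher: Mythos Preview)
Your proposal is correct and follows exactly the approach the paper intends: the paper presents this corollary with no proof beyond the line ``Sparseproject therefore gives the following corollary,'' relying on the preceding remark that $C_{d-k}$ is $2^k$-sparse and has size at most $d^k$, and you have carried out precisely that instantiation with $k=\log n$. Your additional verification of the polynomial running time is more explicit than anything in the paper but is routine and correct.
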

Note that $C_{d-\log n}$ is a super-polynomially sized set of conjunctions. As far as we know, this represents the first algorithm in the non-interactive setting with non-trivial accuracy guarantees for a super-polynomially sized set of conjunctions that also achieves polynomial running time.  

\section{Conclusions and Open Problems}
In this paper, we have given fast interactive and non-interactive algorithms for privately releasing the class of \emph{sparse} queries. Query release algorithms with run-time polynomial in the database size are unfortunately rare, and so a natural question is whether the fast algorithms given here can be leveraged as subroutines in the development of efficient algorithms for other applications. Of course the main question which remains open is to find other classes of queries for which fast data release algorithms exist. Random projections of the database, together with concise representations of the projection matrix seem like a powerful tool. Can they be leveraged in a setting beyond the case of sparse queries, when the norm of the queries are comparable to the norm of the database itself?


\bibliographystyle{alpha}
\bibliography{sparsequeries}

\end{document}